\newcolumntype{H}{>{\setbox0=\hbox\bgroup}c<{\egroup}@{}}
\tikzstyle{vecArrow} = [thick, decoration={markings,mark=at position
\tikzset{%
	font={\footnotesize},
	vertex/.style={draw,circle,inner sep=0pt,minimum width=0.5cm,minimum height=0.5cm,font=\small, scale=0.9},
	terminal/.style={draw,fill=white,rectangle,inner sep=2pt,font=\footnotesize,very thick},
	define color/.code={\definecolor{hsb#1}{Hsb}{#1, 1, 0.75}},
	medge/.style n args={3}{
			line width={#1pt},
			define color={#2},
			draw=hsb#2,
			out=#3,
			in=90
		},
	edge/.style 2 args={
			line width={#1pt},
			define color={#2},
			draw=hsb#2
		},
	edge0/.style 2 args={
			line width={#1pt},
			define color={#2},
			draw=hsb#2,
			out=-130,
			in=90
		},
	edge1/.style 2 args={
			line width={#1pt},
			define color={#2},
			draw=hsb#2,
			out=-50,
			in=90
		},
	zerostub/.style={
			inner sep=0,
			minimum size=3pt,
			circle,
			fill=black
		},
	cross/.style={cross out, draw=black,
			minimum size=2*(#1-\pgflinewidth),
			inner sep=0pt, outer sep=0pt,rotate=45},
	cross/.default={3pt},
	edgeOne/.style={color=RedEdge,ultra thick},
	edgeOneState/.style={color=RedEdge, thick},
	edgeMOne/.style={color=BlueEdge,ultra thick},
	edgeSqrt/.style={color=RedEdge, thick},
	edgeMSqrt/.style={color=BlueEdge, thick},
	edgeFrac/.style={color=RedEdge, thin},
	edgeOver/.style={dotted, color=blue, ultra thick},
	qubit/.style={draw,circle,inner sep=0pt,minimum width=0.35cm,minimum height=0.35cm,font=\footnotesize, thin}
}
\definecolor{RedEdge}{RGB}{191,40,40}
\definecolor{BlueEdge}{RGB}{40,191,191}
\definecolor{Blue01}{rgb}{0.26,0.39,0.85}
\definecolor{Yellow12}{rgb}{1.0,0.88,0.1}
\definecolor{Gray23}{rgb}{0.66,0.66,0.66}
\definecolor{myyellow}{RGB}{255, 253, 0}
\definecolor{myorange}{RGB}{238, 135, 51}
\definecolor{myblue}{RGB}{47, 112, 137}
    \theoremstyle{acmdefinition}
    \newtheorem{remark}[theorem]{Remark}
\newcommand{\calO}{\mathcal{O}}
\newcommand{\CO}{\mathbb{C}}
\newcommand{\hU}{\hat{U}}
  \providecommand\BibTeX{{%
    \normalfont B\kern-0.5em{\scshape i\kern-0.25em b}\kern-0.8em\TeX}}}
\begin{document}

%%
%% The "title" command has an optional parameter,
%% allowing the author to define a "short title" to be used in page headers.
\title{Forward and Backward Constrained Bisimulations for Quantum Circuits using Decision Diagrams}

%%
%% The "author" command and its associated commands are used to define
%% the authors and their affiliations.
%% Of note is the shared affiliation of the first two authors, and the
%% "authornote" and "authornotemark" commands
%% used to denote shared contribution to the research.

\author{L.~Burgholzer}
\email{lukas.burgholzer@tum.de}
\orcid{0000-0003-4699-1316}
\affiliation{%
  \institution{Technical University of Munich}
  \streetaddress{Arcisstraße 21}
  \city{Munich}
  \country{Germany}}
  
\author{A.~Jiménez-Pastor}
\email{ajpa@cs.aau.dk}
\orcid{0000-0002-6096-0623}
\affiliation{%
  \institution{Aalborg University}
  \streetaddress{Selma Lagerløfs Vej 300}
  \city{Aalborg}
  \state{Jutland}
  \country{Denmark}
  \postcode{9220}
}
\author{K.G.~Larsen}
\orcid{0000-0002-5953-3384}
\affiliation{%
  \institution{Aalborg University}
  \streetaddress{Selma Lagerløfs Vej 300}
  \city{Aalborg}
  \state{Jutland}
  \country{Denmark}
  \postcode{9220}
}

\author{M.~Tribastone}
\email{mirco.tribastone@imtlucca.it}
\orcid{0000-0002-6018-5989}
\affiliation{%
  \institution{IMT Lucca}
  \streetaddress{...}
  \city{Lucca}
  \country{Italy}}

\author{M.~Tschaikowski}
\email{tschaikowski@cs.aau.dk}
\orcid{0000-0002-6186-8669}
\affiliation{%
  \institution{Aalborg University}
  \streetaddress{Selma Lagerløfs Vej 300}
  \city{Aalborg}
  \state{Jutland}
  \country{Denmark}
  \postcode{9220}
}

\author{R.~Wille}
\email{robert.wille@tum.de}
\orcid{0000-0002-4993-7860}
\affiliation{%
  \institution{Technical University of Munich}
  \streetaddress{Arcisstraße 21}
  \city{Munich}
  \country{Germany}}
\affiliation{%
\institution{Software Competence Center Hagenberg GmbH}
\streetaddress{Softwarepark 32a}
\city{Hagenberg}
\country{Austria}}

%%
%% By default, the full list of authors will be used in the page
%% headers. Often, this list is too long, and will overlap
%% other information printed in the page headers. This command allows
%% the author to define a more concise list
%% of authors' names for this purpose.
% \renewcommand{\shortauthors}{Trovato and Tobin, et al.}

%%
%% The abstract is a short summary of the work to be presented in the
%% article.
\begin{abstract}
Efficient methods for the simulation of quantum circuits on classic computers are crucial for their analysis due to the exponential growth of the problem size with the number of qubits. Here we study lumping methods based on bisimulation, an established class of techniques that has been proven successful for (classic) stochastic and deterministic systems such as Markov chains and ordinary differential equations. Forward constrained bisimulation yields a lower-dimensional model which exactly preserves quantum measurements projected on a linear subspace of interest. Backward constrained bisimulation gives a reduction that is valid on a subspace containing the circuit input, from which the circuit result can be fully recovered. We provide an algorithm to compute the constraint bisimulations yielding coarsest reductions in both cases, using a duality result relating the two notions. As applications, we provide theoretical bounds on the size of the reduced state space for well-known quantum algorithms for search, optimization, and factorization. Using a prototype implementation, we report significant reductions on a set of benchmarks. Furthermore, we show that constraint bisimulation complements state-of-the-art methods for the simulation of quantum circuits based on decision diagrams.
\end{abstract}

\begin{CCSXML}
<ccs2012>
   <concept>
       <concept_id>10011007.10010940.10010971.10011682</concept_id>
       <concept_desc>Software and its engineering~Abstraction, modeling and modularity</concept_desc>
       <concept_significance>500</concept_significance>
       </concept>
   <concept>
       <concept_id>10010583.10010786.10010813.10011726</concept_id>
       <concept_desc>Hardware~Quantum computation</concept_desc>
       <concept_significance>500</concept_significance>
       </concept>
 </ccs2012>
\end{CCSXML}

\ccsdesc[500]{Software and its engineering~Abstraction, modeling and modularity}
\ccsdesc[500]{Hardware~Quantum computation}

%%
%% Keywords. The author(s) should pick words that accurately describe
%% the work being presented. Separate the keywords with commas.
\keywords{bisimulation, quantum circuits, lumpability, decision diagrams}

% \received{20 February 2007}
% \received[revised]{12 March 2009}
% \received[accepted]{5 June 2009}

%%
%% This command processes the author and affiliation and title
%% information and builds the first part of the formatted document.

\begin{abstract}
Efficient methods for the simulation of quantum circuits on classic computers are crucial for their analysis due to the exponential growth of the problem size with the number of qubits. Here we study lumping methods based on bisimulation, an established class of techniques that has been proven successful for (classic) stochastic and deterministic systems such as Markov chains and ordinary differential equations. Forward constrained bisimulation yields a lower-dimensional model which exactly preserves quantum measurements projected on a linear subspace of interest. Backward constrained bisimulation gives a reduction that is valid on a subspace containing the circuit input, from which the circuit result can be fully recovered. We provide an algorithm to compute the constraint bisimulations yielding coarsest reductions in both cases, using a duality result relating the two notions. As applications, we provide theoretical bounds on the size of the reduced state space for well-known quantum algorithms for search, optimization, and factorization. Using a prototype implementation, we report significant reductions on a set of benchmarks. In particular, we show that constrained bisimulation can boost decision-diagram-based quantum circuit simulation by several orders of magnitude, allowing thus for substantial synergy effects. 
\keywords{bisimulation \and quantum circuits \and lumpability}
\end{abstract}

\maketitle

\section{Introduction}\label{sec:intro}

%\Max{TODO: If I recall correctly, JACM Quantum is not too much concerned if a work has appeared in a conference already. Probably it is thus okay to keep most of the discussion and to add a paragraph in which we discuss the novelty wrt the TACAS version. This is Section 2.1 (lumping algorithm based on DDs, including a complexity statement) and Section 4.2 (the evaluation of the new C++ implementation that fuses DDs with the lumping algorithm).

% Potential input from TUM:
% \begin{itemize}
%     \item Discussion of experiments (see Antonio's bullet points at the end of Section 4.2)
%     \item An intro text to DDs in Section 2.1 (possibly taken from an already existing paper)
%     \item Expansion of related work if deemed necessary
% \end{itemize}

% }

Quantum computers can solve certain problems more efficiently than classic computers. Earlier instances are Grover's quantum search~\cite{grover1996fast} and Shor's factorization~\cite{shor1999polynomial}; more recent work addresses the efficient solution of linear equations~\cite{harrow2009quantum} and the simulation of differential equations~\cite{liu2021efficient}.
Despite its potential and increasing interest from a commercial point of view, quantum computing is still in its infancy. The number of qubits of current quantum computers is prohibitively small; furthermore, low coherence times and quantum noise lead to high error rates. Further research and improvement of quantum circuits thus hinge on the availability of efficient simulation algorithms on classic computers.

Being described by a unitary complex matrix, any quantum circuit can be simulated by means of array structures and the respective matrix operations~\cite{khammassi2017qx,steiger2018projectq,murali2020software}. Unfortunately, direct array approaches are subject to the curse of dimensionality~\cite{DBLP:journals/tcad/NiemannWMTD16} because the size of the matrix is exponential in the number of qubits. This motivated the introduction of techniques that try to overcome the exponential growth, resting, for instance, on the stabilizer formalism~\cite{aaronson2004improved}, tensor networks~\cite{PhysRevLett.91.147902,villalonga2019flexible}, path sum reductions~\cite{DBLP:journals/corr/abs-1805-06908} and decision diagrams~\cite{WilleTools,DBLP:journals/tcad/NiemannWMTD16,9308161}.

Here we study bisimulation relations for quantum circuits. Bisimulation has a long tradition in computer science \cite{sangiorgi:book}. For the purpose of this paper, the most relevant branch of research on this topic regards bisimulations for quantitative models such as probabilistic bisimulation~\cite{Larsen19911,10.1007/3-540-63166-6_14}. This is closely related to ordinary lumpability for Markov chains~\cite{BuchholzOrdinaryExact}, also known as \emph{forward} bisimulation~\cite{Feret2012137}, which yields an aggregated Markov chain by means of partitioning the original state space, such that the probability of being in each macro-state/block is equal to the sum of the probabilities of each state in that block. Exact lumpability~\cite{BuchholzOrdinaryExact}, also known as \emph{backward} bisimulation~\cite{1703385}, exploits a specific linear invariant induced by a partition of the state space such that states in the same partition block have the same probability at all time points~\cite{BuchholzOrdinaryExact}. In an analogous fashion, forward and backward bisimulations have been developed for chemical reaction networks~\cite{concur15,DBLP:conf/gecco/TognazziTTV17,DBLP:journals/bioinformatics/CardelliPTTVW21}, rule-based systems~\cite{Feret2012137,Feret21042009}, and ordinary differential equations~\cite{PNAScttv,DBLP:conf/qest/CardelliTTV18}.

In all these cases, lumping can be mathematically expressed as a specific linear transformation of the original state space into a reduced space that is induced by a partition. However, in general, lumping allows for arbitrary linear transformations~\cite{1703385,DBLP:journals/scp/Boreale20}. Since this may introduce loss of information, \emph{constrained lumping} allows one to specify a subspace of interest that should be preserved in the reduction~\cite{tomlin1997effect,DBLP:journals/scp/Boreale20,ovchinnikov2021,jimenez2022}. In partition-based bisimulations, constraints can be specified as suitable user-defined initial partitions of states for which lumping is computed as their (coarsest) refinement~\cite{lics16,Derisavi2003309}. Bisimulation relations for dynamical systems~\cite{DBLP:journals/tac/PappasLS00,DBLP:journals/lmcs/Boreale19} and the notions of constrained linear lumping in~\cite{tomlin1997effect,ovchinnikov2021,jimenez2022}, instead, can be understood as linear projections (also known as ``lumping schemes'') into a lower-dimensional system that preserves an arbitrary linear constraint subspace.

The aim of this paper is to boost the simulation of quantum circuits via
forward- and backward-type bisimulations that can be constrained to subspaces.\footnote{In the following, the simulation of quantum circuits refers to their execution on a classical computer and not to the notion of one-sided bisimulation.}

Analogously to the cited literature, with \emph{forward constrained bisimulation} (FCB) the aim is to obtain a lower-dimensional circuit which (exactly) preserves the behavior of the original circuit on the subset of interest. In \emph{backward constrained bisimulation} (BCB), the reduction is valid on the constraint subspace; in this manner, the original quantum state can be fully recovered from the reduced circuit. Overall, this setting has complementary interpretation with respect to the analysis of a quantum circuit. It is known that a quantum state can only be accessed by means of a quantum measurement, mathematically expressed as a projection onto a given subspace. FCB, in general, preserves any projection onto the constraint subspace. That is, if the constraint subspace contains the measurement subspace, FCB will exactly preserve the quantum measurement, but the full quantum state cannot be recovered in general. Instead, constraining the invariant set of BCB to contain the input of the circuit ensures that the full circuit result can be recovered from the reduced circuit. We show that FCB and BCB are related by a duality property stating that a lumping scheme is an FCB if and only if its complex conjugate transpose is a BCB. Interestingly, this is analogous to the duality established between ordinary (forward) and exact (backward) lumpability for Markov chains~\cite{Derisavi2003309,DBLP:conf/popl/CardelliTTV16}, although it does not carry over to other models in general~\cite{BBLTTV21Lics,DBLP:conf/cdc/TognazziTTV18}. A relevant implication of this result is that one only needs one algorithm to compute both FCB and BCB. As a further contribution of this paper, we present such an algorithm, developed as an adaptation of the CLUE method for the constrained lumping of systems of ordinary differential equations with polynomial right-hand sides~\cite{ovchinnikov2021} of which it inherits the polynomial-time complexity in matrix size.

To show the applicability of our constrained bisimulations, we analyze several case studies for which we report both theoretical and experimental results. Specifically, we first study three classic quantum circuits for search (Grover's algorithm~\cite[Section 6.2]{nielsen00}), optimization~\cite{farhi2014quantum}, and factorization~\cite[Section 5.3.2]{nielsen00}, respectively. In Grover's algorithm, the cardinality of the search domain is exponential in the number of qubits; we prove that BCB can always reduce the circuit matrix to a $2 \times 2$ matrix while exactly preserving the output of interest. Next, we consider the quantum approximate optimization (QAOA) algorithm for solving SAT and MaxCut instances~\cite{farhi2014quantum}; in this setting, our main theoretical result is an upper bound on the size of the reduced (circuit) matrix  by the number of clauses (SAT) or edges (MaxCut). Finally, for quantum factorization we prove that the size of the reduced matrix gives the multiplicative order, that is, it solves the order finding problem to which the factorization problem can be reduced~\cite{nielsen00}.
%\Max{would comment next: of the modulo operation that the problem seeks to solve.}

From an experimental viewpoint, using a prototype based on a publicly available implementation of CLUE, we compare the aforementioned theoretical bounds against the actual reductions on a set of randomly generated instances. Moreover, we conducted a large-scale evaluation on common quantum algorithms collected from the MQT Bench repository~\cite{quetschlich2022mqtbench}, showing considerable reductions in all cases. Finally, we demonstrate that constrained bisimulation complements state-of-the-art methods for quantum circuit simulation based on decision diagrams~\cite{DBLP:journals/tcad/NiemannWMTD16}, as implemented in the MQT DDSIM tool~\cite{WilleTools}.

\emph{Relation to conference version.} The present work extends the conference version~\cite{qTACAS} by computing quantum bisimulations using decision diagrams. Specifically, the exponential time and space complexity of Theorem~\ref{thm:red:measure} is improved in the new Theorem~\ref{thm:dd}. The latter states that the worst-case complexity for computing the quantum bisimulation, the computational bottleneck of~\cite{qTACAS}, is at most polynomial in the size of the largest decision diagram appearing during the computation of the quantum bisimulation. In a similar vein, the Python implementation from~\cite{qTACAS} has been re-implemented in C++ and uses decision diagrams instead of vectors. Using the improved framework, the quantum benchmarks  from~\cite{qTACAS} have been revisited. Apart from reducing the simulation times from~\cite{qTACAS} by several orders of magnitude, it is demonstrated that the combination of quantum bisimulation and decision diagrams outperforms each approach in isolation.

\emph{Further related work.} Probabilistic bisimulations~\cite{10.1007/3-540-63166-6_14,DBLP:conf/concur/BacciBLM16,DBLP:journals/lmcs/BacciBLM18} have been considered for quantum extensions of process calculi, see~\cite{10.1145/1040305.1040318,FENG20071608} and references therein. Similarly to their classic counterparts, these seek to identify concurrent (quantum) processes with similar behavior.
The current work, instead, is about boosting the simulation of quantum circuits and is in line with~\cite{DBLP:conf/qest/BacciBLM13,DBLP:conf/popl/CardelliTTV16,DBLP:journals/pe/TschaikowskiT17}. Specifically, it operates directly over quantum circuits rather than processes and exploits general linear invariants in the (complex) state space. In engineering, invariant-based reductions of linear systems are known under the names of proper orthogonal decomposition~\cite{antoulas,nielsen2009quantum}, Krylov methods~\cite{antoulas}, and dynamic mode decomposition~\cite{DMD2009,Kumar_2015,Goldschmidt_2021}. Linear invariants also describe safety properties~\cite{bookBHK} in quantum model checking~\cite{10.1145/2629680,DBLP:journals/corr/abs-1807-09466}, without being used for reduction though. $\mathcal{L}$-bisimulation~\cite{DBLP:journals/scp/Boreale20} and \cite{Kumar_2015,Goldschmidt_2021} yield the same reductions, with the difference being that the former obtains the smallest reduction up to a given initial constraint, while the latter computes the smallest reduction up to an initial condition. While similarly to us relying on reduction techniques, \cite{Kumar_2015,Goldschmidt_2021} focus on the reduction of quantum Hamiltonian dynamics, with applications mostly in quantum physics and chemistry. Instead, we study the reduction of quantum circuits which are the prime citizens of quantum computing. Moreover, we provide a prototype implementation of our approach and perform a large-scale numerical evaluation.

\emph{Paper outline.} The paper is structured as follows. After a review of core concepts, Section~\ref{sec:prel} introduces forward and backward constrained bisimulation of (quantum) circuits. There, we also provide an algorithm for the computation of constrained bisimulations by extending~\cite{ovchinnikov2021,leguizamon2023} to circuits. Section~\ref{sec:applications} then derives bounds on the reduction sizes of quantum search~\cite{grover1996fast}, quantum optimization~\cite{farhi2014quantum} and quantum order finding~\cite{nielsen00}. Section~\ref{sec:benchmark}, instead, conducts a large-scale evaluation on published quantum benchmarks~\cite{quetschlich2022mqtbench} and compares constrained bisimulations with MQT DDSIM with respect to the possibility of speeding up circuit simulations. The paper concludes in Section~\ref{sec:conc}.

\section{Preliminaries}

\subsubsection*{Notation.}  We shall denote by $n$ the number of qubits and set $N = 2^n$ for convenience. Column vectors are denoted by the \emph{ket} notation $\ket{z}$, while the complex conjugate transpose of $\ket{z}$ is denoted by $\ket{z}^\dagger = \bra{z}$, i.e., $\bra{z} = \ket{\bar{z}}^T$ with $\bar{\cdot}$ and $\cdot^T$ denoting complex conjugation and transpose, respectively. In a similar vein, $\bra{z} \, \ket{z} = \braket{z}$, where $\langle \cdot \mid \cdot \rangle$ is the standard scalar product over $\CO^N$. Following standard notation, the canonical basis vectors of $\CO^N$ are expressed using tensor products and bit strings $x \in \{0,1\}^n$; specifically, writing $\otimes$ for the Kronecker product, we have $\ket{x_n} \otimes \ket{x_{n-1}} \otimes \ldots \otimes \ket{x_1} = \ket{x_{n}} \ket{x_{n-1}} \ldots \ket{x_1} = \ket{x_n x_{n-1} \ldots x_1} = \ket{d}$, where $0 \leq d \leq 2^n - 1$ is a decimal representation of $x$, see~\cite{nielsen00} for details. We usually denote by $\ket{x}$ canonical basis vectors with $x \in \{0,1\}^n$, whereas $\ket{u}, \ket{v}, \ket{w}, \ket{z} \in \CO^N$ refer to linear combinations in the form  $\ket{z} = \sum_{x \in \{0,1\}^n} c_x \ket{x}$ with $c_x \in \CO$. For any canonical basis vector, we have $\ket{x} = \ket{\bar{x}}$. To avoid confusion, forward constrained bisimulations (FCB) are denoted by row matrices $L \in \CO^{d \times N}$ with $d \leq N$, while backward constrained bisimulations (BCB) are denoted by column matrices $L^\dagger \in \CO^{N \times d}$.

\subsection{Linear Algebra and Dynamical Systems} 

We begin by introducing core concepts from linear algebra and quantum computing~\cite{meyer01,nielsen00}.

\begin{definition}[Core Concepts]
\begin{itemize}
    \item The column space of a matrix $M$ are all linear combinations of its columns and is denoted by $\langle M \rangle_c$. One says, the columns of $M$ span $\langle M \rangle_c$.
    \item The row space of a matrix is the set of all linear combinations of its rows and is denoted by $\langle M \rangle_r$. One says, the rows of $M$ span $\langle M \rangle_r$.
    \item A (quantum) circuit over $n$ qubits is described by a unitary map $U \in \CO^{N \times N}$,  that is, $U^{-1} = U^\dagger$.
    \item A (quantum) state $\ket{z} \in \CO^N$ is a vector with (Euclidian) norm one.
    \item A matrix $P \in \CO^{N \times N}$ is an orthogonal projection if $P \circ P = P = P^\dagger$.
    \item Any vector $\ket{z} \in \CO^N$ generates the linear subspace $S_{\ket{z}} = \langle \ket{z} \rangle_c$.
\end{itemize}
\end{definition}

Throughout the paper, we do not work at the higher level where quantum circuits are defined by means of quantum gate compositions~\cite{nielsen00}. Instead, we work directly at the level of the unitary maps that are induced by such compositions. With this in mind, we use the terms ``unitary map'' and ``quantum circuit'' interchangeably.

We distinguish between one- and multi-step applications of a quantum circuit~\cite{nielsen00}. For an input state $\ket{w_0} \in \CO^N$, the full quantum state after \emph{one-step} application is $U \ket{w_0}$. Instead, the full quantum state after a \emph{multi-step} application is given by $U^k \ket{w_0}$, where $k > 1$ is the number of steps. These definitions justify interpreting a quantum circuit as a discrete-time dynamical system as follows.

\begin{definition}[Dynamical System]
A circuit $U \in \CO^{N \times N}$ with input state $\ket{w_0}$ induces the discrete time dynamical system (DS) $\ket{w_{k+1}} = U \ket{w_k}$, with $k \geq 0$. We call $\ket{w_k}$ the \emph{full quantum state} at step $k$.
\end{definition}

\begin{example}\label{ex:pauli:x}
The one-qubit circuit
$U = \bigl( \begin{smallmatrix} 0 & 1 \\ 1 & 0 \end{smallmatrix}\bigr)$ is known as the Pauli $X$-gate~\cite{nielsen00}.
In the case of $k \geq 1$ steps and input $\ket{w_0} = \ket{\phi}$, where $\ket{\phi} = (1,-1) / \sqrt{2}$, the induced DS can be shown to be $\ket{w_k} = (-1)^k \ket{\phi}$.
\end{example}

The result of a quantum computation is not directly accessible and is usually queried using quantum measurements~\cite{nielsen00}. These can be described by projective measurements~\cite{nielsen00}, formally given by a family of orthogonal projections $\{P_1, \ldots, P_m\}$ that satisfy $P_1 + \ldots + P_m = I$.   When a quantum state $\ket{z} \in \CO^N$ is measured, the probability of outcome $1 \leq i \leq m$ is $\pi_i = \bra{z} P_i \ket{z}$. In case of outcome $i$, the quantum state after measurement is $P_i \ket{z} / \sqrt{\pi_i}$. We will be primarily concerned with the case $\{P,I-P\}$ for a given orthogonal projection $P$.

Often, one is interested in querying states from a specific subspace $S$. For example, the result of the HHL algorithm~\cite{harrow2009quantum}, considered in Section~\ref{sec:benchmark}, is stored in a subset of all qubits, i.e., in a subspace. To this end, it suffices to use a projective measurement that identifies $S$.

\begin{definition}
Given an orthogonal projection $P$, we call $P \ket{z}$ the $P$-measure\-ment of $\ket{z}$. A subspace $S \subseteq \CO^N$ is identifiable by $P$ if $P \ket{z} = \ket{z}$ for all $\ket{z} \in S$.
\end{definition}

A particularly simple but useful class of projective measurements is the one that measures a single state $\ket{w}$, i.e., that identifies the space $S_{\ket{w}}$ spanned by $\ket{w}$. This is given by the orthogonal projection $P_{\ket{w}} := \ket{w} \bra{w}$.

\begin{example}\label{ex:pauli:x:2}
Assume that we are interested in measuring the result of Example~\ref{ex:pauli:x} using measurement $P_{\ket{\phi}}$ that identifies $S_{\ket{\phi}}$. Then, for $\ket{w_0} = \binom{1}{0}$, it holds that $P_{\ket{\phi}} \ket{w_k} = (-1)^k \ket{\phi} / \sqrt{2}$.
%For input $\ket{w_0} = \ket{\phi}$, the it holds that $P_{\ket{\phi}} \ket{w_k} = (-1)^k \ket{\phi}$. Instead, for   $\ket{w_0} = \binom{1}{0}$, the DS satisfies $P_{\ket{\phi}} \ket{w_k} = (-1)^k \ket{\phi} / \sqrt{2}$
%Then, for any $\ket{z} = \binom{\alpha}{\beta} \in \CO^2$, it holds that $P_{\ket{\phi}} \ket{z} = \ket{\phi} \bra{\phi} \, \ket{z} = \ket{\phi} \bra{\phi} \ket{z} = \ket{\phi} (\alpha - \beta) / \sqrt{2}$.
\end{example}

\subsection{Linear Algebra with Decision Diagrams} 

In the following we briefly review decision diagrams and outline how these can be used to conduct linear algebra operations efficiently; see~\cite{wille2023hamiltonian} for a detailed discussion. We start by noting that a qubit $\ket{z} = \alpha_0 \ket{0} + \alpha_1 \ket{1}$ can be conveniently represented as a decision diagram
\begin{equation*}
	\begin{tikzpicture}[node distance=0.5 and 0.5]
		\node (s0) {$\ket{z} = \alpha_0 \ket{0} + \alpha_1 \ket{1} = \begin{pmatrix}
	\alpha_0 & \alpha_1
\end{pmatrix}^\top\equiv$};
		\node[vertex,right=of s0]  (dd0) {};
		\node[terminal, below=of dd0] (t0) {};
		\draw[edge={1}{0}] ($(dd0)+(0,0.5cm)$) -- (dd0);
		\draw[edge0={0.707}{0}] (dd0) to node[midway, left] {$\alpha_0$} (t0);
		\draw[edge1={0.707}{0}] (dd0) to node[midway, right] {$\alpha_1$} (t0);
	\end{tikzpicture}.
\end{equation*}
It consists of a single node with one incoming edge that represents the entry point into the decision diagram as well as two successors that represent the split between the basis states of the single qubit, ending in a terminal node denoted by the black box. The state's amplitudes are annotated to the respective edges. Edges without annotations correspond to an edge weight of 1. 

Building on the intuition from a single-qubit state, we can move to larger systems.

% \begin{example}
% Consider the computational basis states $\ket{0}$ and $\ket{1}$.
% Then, the corresponding decision diagrams have the structures
% \begin{equation*}
% 	\begin{tikzpicture}[node distance=0.5 and 0.5]
% 		\node (s0) {$\ket{0} = \begin{pmatrix}
% 	1 & 0
% \end{pmatrix}^\top\equiv$};
% 		\node[vertex,right=of s0] (dd0) {};
% 		\node[terminal, below=of dd0] (t0) {};
% 		\draw[edge={1}{0}] ($(dd0)+(0,0.5cm)$) -- (dd0);
% 		\draw[edge0={1}{0}] (dd0) to node[midway,left] {$1$} (t0);
% 		\draw[edge1={1}{0}] (dd0) to ++(-50:0.35) node[zerostub] {};	
		
% 		\node[right=of dd0] (s1) {$\qquad \qquad \ket{1} = \begin{pmatrix}
% 	0 & 1
% \end{pmatrix}^\top\equiv$};

% 		\node[vertex,right=of s1] (dd1) {};
% 		\node[terminal, below=of dd1] (t1) {};
% 		\draw[edge={1}{0}] ($(dd1)+(0,0.5cm)$) -- (dd1);
% 		\draw[edge1={1}{0}] (dd1) to node[midway,right] {$1$} (t1);
% 		\draw[edge0={1}{0}] (dd1) to ++(-130:0.35) node[zerostub] {};	
% 	\end{tikzpicture} .
% \end{equation*}
% In each of the cases, one of the successors ends in the terminal node, while the other ends in a zero stub, indicated by a black dot.
% \end{example}

\begin{example}\label{ex:multiqubitdd}
	Consider the following state vector of a three-qubit system:
	\begin{equation}
		\ket{z} = \begin{pmatrix} \frac{1}{2\sqrt{2}} & \frac{1}{2\sqrt{2}} &  \frac{1}{2} & 0 & \frac{1}{2\sqrt{2}} & \frac{1}{2\sqrt{2}} & \frac{1}{2} & 0\end{pmatrix}^T
	\end{equation}
	Then, $\ket{z}$ can be recursively split into equally-sized parts, essentially making a case distinction over the qubit value $q_i \in \{ 0, 1 \}$, for each $i =2,1,0$:
	\begin{equation*}
                \overbrace{
                \overbrace{\begin{matrix}
                \overbrace{\begin{matrix}
                \bigl( \overset{\ket{000}}{
                \begin{matrix} \frac{1}{2\sqrt{2}}
                \end{matrix} }
                & \overset{\ket{001}}{
                \begin{matrix} \frac{1}{2\sqrt{2}}
                \end{matrix} }
                \end{matrix}}^{\ket{00q_0}}
                & \overbrace{
                \begin{matrix}\overset{\ket{010}}{
                \begin{matrix} \frac{1}{2}
                \end{matrix}}
                & \overset{\ket{011}}{
                \begin{matrix} 0
                \end{matrix} }
                \end{matrix}}^{\ket{01q_0}}
                \end{matrix}}^{\ket{0q_1q_0}}
                \ \
                \overbrace{\begin{matrix}
                \overbrace{\begin{matrix}
                \overset{\ket{100}}{
                \begin{matrix} \frac{1}{2\sqrt{2}}
                \end{matrix} }
                & \overset{\ket{101}}{
                \begin{matrix} \frac{1}{2\sqrt{2}}
                \end{matrix} }
                \end{matrix}}^{\ket{10q_0}}
                & \overbrace{
                \begin{matrix} \overset{\ket{110}}{
                \begin{matrix} \frac{1}{2}
                \end{matrix} }
                & \overset{\ket{111}}{
                \begin{matrix} 0
                \end{matrix} } \bigr)^\top
                \end{matrix}}^{\ket{11q_0}}
                \end{matrix}}^{\ket{1q_1q_0}}
                }^{\ket{q_2q_1q_0}} \\\\\\\\\ ,
    \end{equation*}
    This directly translates to the decision diagram:
 	\begin{equation*}
 		\begin{tikzpicture}[node distance=0.5 and 1.0]
 		\node[vertex] (q2) {$q_2$};
 		\node[vertex,below left=0.5 and 1.75 of q2] (q1a) {$q_1$};
 		\node[vertex,below right=0.5 and 1.75 of q2] (q1b) {$q_1$};
 		\node[vertex,below left= of q1a] (q0a) {$q_0$};
 		\node[vertex,below right= of q1a] (q0b) {$q_0$};
 		\node[vertex,below left= of q1b] (q0c) {$q_0$};
 		\node[vertex,below right= of q1b] (q0d) {$q_0$};
 		\node[terminal, below= of q0a] (ta) {};
 		\node[terminal, below= of q0b] (tb) {};
 		\node[terminal, below= of q0c] (tc) {};
 		\node[terminal, below= of q0d] (td) {};
 	
		\draw[edge={1}{0}] ($(q2)+(0,0.5cm)$) -- (q2);
		
		\draw[edge0={1}{0}] (q2) to (q1a);		      
            \draw[edge1={1}{0}] (q2) to (q1b);
		
		\draw[edge0={1}{0}] (q1a) to (q0a);
		\draw[edge1={1}{0}] (q1a) to (q0b);
		
		\draw[edge0={1}{0}] (q1b) to (q0c);			
		\draw[edge1={1}{0}] (q1b) to (q0d);
		
		\draw[edge0={1}{0}] (q0a) to node[midway, left] {$\frac{1}{2\sqrt{2}}$} (ta);
  		\draw[edge1={1}{0}] (q0a) to node[midway, right] {$\frac{1}{2\sqrt{2}}$} (ta);

  		\draw[edge0={1}{0}] (q0b) to node[midway,left] {$\frac{1}{2}$} (tb);
      	\draw[edge1={1}{0}] (q0b) to ++(-50:0.35)  node[zerostub] {};

        \draw[edge0={1}{0}] (q0c) to node[midway, left] {$\frac{1}{2\sqrt{2}}$} (tc);
      	\draw[edge1={1}{0}] (q0c) to node[midway, right] {$\frac{1}{2\sqrt{2}}$} (tc);
       
    	\draw[edge0={1}{0}] (q0d) to node[midway,left] {$\frac{1}{2}$} (td);
        \draw[edge1={1}{0}] (q0d) to ++(-50:0.35) node[zerostub] {};
	\end{tikzpicture}
 	\end{equation*}
 	Each level of the decision diagram consists of decision nodes with corresponding left and right successor edges. These successors represent the path that leads to an amplitude where the local quantum system (corresponding to the \emph{level} of the node, annotated here with the labels) is in the $\ket{0}$ (left successor) or the $\ket{1}$ state (right successor).
\end{example}
 	
At this point, this has been just a one-to-one translation between the state vector and a graphical representation.
The unique feature of decision diagrams is that their graph structure allows redundant parts to be merged in the representation instead of being repeated.

\begin{example}\label{ex:reduction}
	Observe how, in the previous example, the left and right successors of the top-level ($q_0$) node lead to exactly the same structure.
	As a result, the whole sub-diagram does not need to be represented twice, i.e.,
	\begin{equation*}
		\begin{tikzpicture}[node distance=0.5 and 0.75]
 		\node[vertex] (q2) {$q_2$};
 		\node[vertex,below= of q2] (q1) {$q_1$};
 		\node[vertex,below left= of q1] (q0a) {$q_0$};
 		\node[vertex,below right= of q1] (q0b) {$q_0$};
 		\node[terminal, below= of q0a] (ta) {};
 		\node[terminal, below= of q0b] (tb) {};
 	
		\draw[edge={1}{0}] ($(q2)+(0,0.5cm)$) -- (q2);
		
		\draw[edge0={1}{0}] (q2) to (q1);
		\draw[edge1={1}{0}] (q2) to (q1);
		
		\draw[edge0={1}{0}] (q1) to (q0a);
		\draw[edge1={1}{0}] (q1) to (q0b);
		
		\draw[edge0={1}{0}] (q0a) to node[midway, left] {$\frac{1}{2\sqrt{2}}$} (ta);
  		\draw[edge1={1}{0}] (q0a) to node[midway, right] {$\frac{1}{2\sqrt{2}}$} (ta);

  		\draw[edge0={1}{0}] (q0b) to node[midway,left] {$\frac{1}{2}$} (tb);
      	\draw[edge1={1}{0}] (q0b) to ++(-50:0.35)  node[zerostub] {};
	\end{tikzpicture}
	\end{equation*}
	From a memory perspective, this reduction  compressed the memory required to represent the state by 50\%.
\end{example}

Intuitively, the idea is to represent state vectors compactly as decision diagrams by halving the vector in a recursive fashion until all state vector redundancies have been exploited. Although substantial compression is often possible, we mention that there exist state vectors yielding decision diagrams of exponential size in the number of qubits~\cite{DBLP:journals/tcad/ZulehnerW19}.

Merely defining means for compactly representing states is not sufficient, and it is crucial also to define efficient means to work with or manipulate the resulting representations. In~\cite{wille2023hamiltonian} it has been outlined how decision diagrams can be used to perform linear algebra operations efficiently. There, it has been argued that decision diagrams support addition, scalar multiplication, and the scalar product. The main idea behind the efficient implementation is to recursively break the respective operations down into sub-computations. 

We shall sketch the addition of decision diagrams next. The idea is to recursively rewrite it by noting that 
\begin{equation*}
        \ket{z} + \ket{z'} = \begin{pmatrix} z_0 \\ z_1 \end{pmatrix} + \begin{pmatrix} z'_0 \\ z'_1 \end{pmatrix} 
         = w \begin{pmatrix} \alpha_0  \\ \alpha_1 \end{pmatrix} + w' \begin{pmatrix} \alpha'_0 \\ \alpha'_1 \end{pmatrix} 
         = \begin{pmatrix} w \alpha_0 + w' \alpha'_0 \\ w \alpha_1 + w' \alpha'_1 \end{pmatrix},
\end{equation*}
where $w$ and $w'$ are common factors of the terms in $\ket{z}$ and $\ket{z'}$, respectively. In the decision-diagram formalism, this corresponds to a simultaneous traversal of both decision diagrams from their roots to the terminal (multiplying edge weights along the way until the individual amplitudes are reached) and back again (accumulating the results of the recursive computations).
More precisely,
\begin{equation}
	\begin{tikzpicture}[node distance=0.5 and 0.25]
    	\node[vertex] (top0) {};
    	\node[vertex, dashed, below left=of top0] (t0) {$z_0$};
    	\node[vertex, dashed, below right=of top0] (t1) {$z_1$};
    	\draw[edge={1}{0}] ($(top0)+(0,0.5cm)$) to node[midway,right] {$w$} (top0);
    	\draw[edge0={1}{0}] (top0) to node[midway,left] {$\alpha_0$} (t0);
%    	\node[below=0.05 of t0] {$\Psi_0$};
    	\draw[edge1={1}{0}] (top0) to node[midway,right] {$\alpha_1$} (t1);
%    	\node[below=0.05 of t1] {$\Psi_1$};
    	
    	\node[vertex,right=2 of top0] (top1) {};
    	\node[vertex, dashed, below left=of top1] (t2) {$z'_0$};
    	\node[vertex, dashed, below right=of top1] (t3) {$z'_1$};
    	\draw[edge={1}{0}] ($(top1)+(0,0.5cm)$) to node[midway,right] {$w'$} (top1);
    	\draw[edge0={1}{0}] (top1) to node[midway,left] {$\alpha'_0$} (t2);
%    	\node[below=0.05 of t2] {$\Phi_0$};
    	\draw[edge1={1}{0}] (top1) to node[midway,right] {$\alpha'_1$} (t3);
%    	\node[below=0.05 of t3] {$\Phi_1$};
    	
    	\node[] at ($(top0)!0.5!(top1)$) (plus) {$+$};
    	
    	\node[vertex,below=1.5 of plus] (top2) {};
    	\node[rectangle, draw, dashed, below left=of top2] (t4) {
    	\begin{tikzpicture}
    	\node[vertex, dashed] (a) {$z_0$};
    	\draw[edge={1}{0},solid] ($(a)+(0,0.5cm)$) to node[midway,right] {$w\alpha_0$} (a);
    	\node[vertex, dashed,right=1.0 of a] (b) {$z'_0$};
    	\draw[edge={1}{0},solid] ($(b)+(0,0.5cm)$) to node[midway,right] {$w'\alpha'_0$} (b);
    	\node[right=0.5 of a] {$+$};
		\end{tikzpicture}};
    	\node[rectangle, draw, dashed, below right=of top2] (t5) {
    	\begin{tikzpicture}
    	\node[vertex, dashed] (a) {$z_1$};
    	\draw[edge={1}{0},solid] ($(a)+(0,0.5cm)$) to node[midway,right] {$w\alpha_1$} (a);
    	\node[vertex, dashed,right=1.0 of a] (b) {$z'_1$};
    	\draw[edge={1}{0},solid] ($(b)+(0,0.5cm)$) to node[midway,right] {$w'\alpha'_1$} (b);
    	\node[right=0.5 of a] {$+$};
		\end{tikzpicture}};
    	\draw[edge={1}{0}] ($(top2)+(0,0.5cm)$) -- (top2);
    	\draw[edge0={1}{0}] (top2) to (t4);
%    	\node[below=0.05 of t4] {$\Psi_0+\Phi_0$};
    	\draw[edge1={1}{0}] (top2) to (t5); 
%    	\node[below=0.05 of t5] {$\Psi_1+\Phi_1$};
    	
    	\node[] at ($(plus)!0.6!(top2)$) {$=$};
		\end{tikzpicture} \\\\\ ,
\end{equation}
where the dashed nodes represent the respective successor decision diagrams.
Overall, this results in a complexity that is linear in the size of the larger decision diagram. Scalar multiplication and scalar product of decision diagrams can be treated in a similar fashion and yield the same complexity~\cite{wille2023hamiltonian}.

\section{Constrained Bisimulations for Quantum Circuits}\label{sec:prel}

\subsection{Forward Constrained Bisimulation}

We next introduce FCB. 

\begin{definition}[Forward Constrained Bisimulation, FCB]\label{def:cons:lump}
Fix a circuit defined by $U \in \CO^{N \times N}$ with initial state $\ket{w_0}$ and a matrix $L \in \CO^{d \times N}$ with orthonormal rows.
%consider the discrete-time dynamical system (DS) \Mirco{The DS should be introduced before, it is already used in Example 1} $\ket{w_{k+1}} = U \ket{w_k}$ with $k \geq 0$ \Mirco{with initial condition $\ket{w_0} \in \CO^N$}.
\begin{enumerate}[a)]
    \item The DS reduced by $L$ is given by $\ket{\hat{w}_{k+1}} = \hU \ket{\hat{w}_k}$, where $\hU = L U L^\dagger$, and initial state $\ket{\hat{w}_0} = L\ket{w_0}$.% for any initial condition $\ket{w_0} \in \CO^N$.
    \item $L$ is called forward constrained bisimulation of DS $\ket{w_{k+1}} = U \ket{w_k}$ wrt constraint subspace $S \subseteq \CO^N$ when $S \subseteq \langle L^\dagger \rangle_c$ and $L \ket{w_k} = \ket{\hat{w}_k}$ for all $k \geq 1$.
\end{enumerate}
\end{definition}

Before commenting on the definition, we establish the following.

\begin{lemma}\label{lem:unit}
The reduced map $\hat{U}$ in Definition~\ref{def:cons:lump} is unitary.
\end{lemma}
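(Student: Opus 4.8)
The plan is to recognise $\hat{U} = LUL^\dagger$ as the restriction of $U$ to the subspace $V := \langle L^\dagger\rangle_c$, expressed in the orthonormal basis formed by the columns of $L^\dagger$: since the restriction of a unitary to an invariant subspace is again unitary, the lemma reduces to showing that being an FCB makes $V$ invariant under $U$. First I would dispose of the bookkeeping that only uses orthonormality of the rows of $L$. From $LL^\dagger = I_d$ it follows that $L^\dagger$ has orthonormal columns, so $L^\dagger\colon \CO^d \to \CO^N$ is an isometry onto $V$ with $\dim V = d$, and the restriction $L|_V\colon V\to\CO^d$ is its two-sided inverse isometry. Moreover $P := L^\dagger L$ satisfies $P^2 = L^\dagger(LL^\dagger)L = P$ and $P^\dagger = P$, i.e. $P$ is the orthogonal projection onto $V$; in particular $LP = L$ and $PL^\dagger = L^\dagger$.

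The crux is to extract from Definition~\ref{def:cons:lump} that $V$ is $U$-invariant, equivalently $LU = LUP$ (the rows of $LU$ lie in the row space of $L$). Unwinding the FCB condition, $\ket{\hat{w}_k} = \hat{U}\ket{\hat{w}_{k-1}} = LUL^\dagger L\ket{w_{k-1}} = LUP\ket{w_{k-1}}$ while $L\ket{w_k} = LU\ket{w_{k-1}}$, so $L\ket{w_k} = \ket{\hat{w}_k}$ for all $k\ge 1$ forces $LU(I-P)\ket{w_{k-1}} = 0$ for every $k$, i.e. $LU(I-P)$ annihilates the orbit of the input, which over the admissible inputs gives $LU(I-P) = 0$. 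Taking conjugate transposes, $U^\dagger L^\dagger = PU^\dagger L^\dagger$, so $U^\dagger V \subseteq V$; since $U$ is unitary and $\dim V < \infty$ this is equivalent to $U^\dagger V = V = UV$, whence also $PUL^\dagger = UL^\dagger$ and $PU^\dagger L^\dagger = U^\dagger L^\dagger$.

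With invariance in hand the conclusion is one line. Conceptually, $U|_V$ is an isometry of the finite-dimensional space $V$ onto itself, hence unitary, and $\hat{U} = (L|_V)\circ(U|_V)\circ L^\dagger$ is a composition of unitaries $\CO^d\to V\to V\to\CO^d$. Equivalently, by direct computation,
\[
\hat{U}^\dagger\hat{U} = L U^\dagger L^\dagger L U L^\dagger = L U^\dagger P U L^\dagger = L U^\dagger U L^\dagger = LL^\dagger = I_d,
\qquad
\hat{U}\hat{U}^\dagger = L U P U^\dagger L^\dagger = L U U^\dagger L^\dagger = I_d .
\]
I expect the genuine work to lie entirely in the middle step — turning ``$L$ is an FCB'' into invariance of $\langle L^\dagger\rangle_c$ under $U$ (this is where the FCB hypothesis is really used, and I would isolate it as a small characterisation lemma); the remainder is routine manipulation of $LL^\dagger = I_d$ and $L^\dagger L = P$. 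One subtlety to flag: unitarity of $\hat U$ needs both $\hat U^\dagger\hat U = I_d$ and $\hat U\hat U^\dagger = I_d$, and these invoke invariance under $U$ and under $U^\dagger$ respectively — which coincide here precisely because $U$ is unitary on a finite-dimensional space.
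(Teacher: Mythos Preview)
Your approach is essentially the paper's: both rest on the invariance $LUL^\dagger L = LU$ (the paper simply cites it from the lumping literature, while you derive it from the FCB condition at $k=1$ for arbitrary $\ket{w_0}$, which the paper explicitly permits right after Definition~\ref{def:cons:lump}) and then compute $\hat U^\dagger\hat U = LU^\dagger(L^\dagger L)UL^\dagger = LU^\dagger UL^\dagger = LL^\dagger = I_d$. The paper stops there since $\hat U$ is square; your extra verification of $\hat U\hat U^\dagger$ and the conceptual reading as the restriction of $U$ to an invariant subspace are correct but not needed.
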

\begin{proof}%[Lemma~\ref{lem:unit}]
See proof of Theorem~\ref{thm:duality}.  
\end{proof}

We note that the reduction holds for any choice of initial state $\ket{w_0}$, analogously to the aforementioned forward-type bisimulations~\cite{PNAScttv,DBLP:conf/popl/CardelliTTV16} for (real-valued) dynamical systems. The assumption of orthonormality of rows of $L$ implies that $d \leq N$, i.e., $L$ is a transformation onto a possibly smaller-dimensional state space. Although it can be dropped without loss of generality~\cite{ovchinnikov2021}, it allows a more immediate relation to projective measurements. In fact, matrix $L$ induces the orthogonal projection $P_L$ defined as $P_L = L^\dagger L$. This projective measurement identifies $S$ because $S \subseteq \langle L^\dagger \rangle_c$. Moreover, $P_L \ket{w_k}$ is preserved in the reduced system for any $k$. To see this, it suffices to multiply $L \ket{w_k} = \ket{\hat{w}_k}$ by $L^\dagger$ from the left and to note that this yields $P_L \ket{w_k} = L^\dagger \ket{\hat{w}_k}$. 

\begin{example}\label{ex:lump}
Continuing Example~\ref{ex:pauli:x}, it can be shown that the $2 \times 1$ matrix $L = \ket{\phi}^\dagger = (1,-1) / \sqrt{2}$ is an FCB wrt $S_{\ket{\phi}}$. In fact, since $\hU = L U L^\dagger = -1$, we obtain $\ket{\hat{w}_{k+1}} = - \ket{\hat{w}_k}$, while a direct calculation confirms that $L \ket{w_0} = \ket{\hat{w}_0}$ implies $L \ket{w_{k}} = \ket{\hat{w}_{k}}$ for all $k > 0$.
Multiplying both sides by $L^\dagger$ from the left yields $L^\dagger L U^k \ket{w_0} = (-1)^k L^\dagger L \ket{w_0}$. Consequently, the $P_{\ket{\phi}}$-measurement of the original map can be obtained from the $P_{\ket{\phi}}$-measurement of the reduced map. \end{example}

Algorithm~\ref{alg} adapts the algorithm for (real-valued) systems of ordinary differential equations with polynomial derivatives developed in~\cite{ovchinnikov2021,leguizamon2023} to the complex domain and yields the minimal FCB wrt subspace $S$, i.e., it returns an orthonormal $L \in \CO^{d \times N}$ whose dimension $d$ is minimal. 

\begin{algorithm}[t]
		\caption{\footnotesize Computation of an FCB $L$ wrt subspace $S$}\label{alg}
		
        %\algsetup{linenosize=\tiny}
		%\scriptsize

	\begin{algorithmic}[1]

		\REQUIRE Unitary map $U \in \CO^{N \times N}$ and subspace $S \subseteq \CO^N$.
  
		\STATE \textbf{compute} orthonormal basis of $S$, store it in column matrix $L^\dagger \in \CO^{N \times d_0}$ \label{alg:init} 

		\REPEAT\label{alg:main:loop} 
  
		\FORALL{columns $\ket{z}$ of $L^\dagger$}\label{alg:loop} 

            \STATE {\textbf{compute} $\ket{\pi} = P_L U \ket{z}$}
  \label{alg:acl:proj}
		\IF{$\ket{\pi} \neq U \ket{z}$}\label{alg:acl:check}
            \STATE $\ket{w} = U \ket{z} - \ket{\pi} $ \label{alg:orth:comp}
		\STATE {\textbf{append} column $\ket{w} / \bra{w} \ket{w}$ to $L^\dagger$}\label{alg:append}
		\ENDIF 

		\ENDFOR

		\UNTIL{no columns have been appended to $L^\dagger$}
  
        \RETURN matrix $L^{\dagger \dagger}$.

	\end{algorithmic}

 \end{algorithm}

\begin{theorem}[Minimal FCB]\label{thm:red:measure}
Algorithm~\ref{alg} computes a minimal FCB $L \in \CO^{d \times N}$ w.r.t. subspace $S$, that is, the rowspace of any FCB $L'$ w.r.t. $S$ contains that of $L$. The complexity of Algorithm~\ref{alg} is polynomial in $N$. 
\end{theorem}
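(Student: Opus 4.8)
The plan is to establish correctness and termination of Algorithm~\ref{alg} separately, and then to argue the polynomial bound on its running time. For \emph{correctness} I would show that on termination the returned matrix $L$ (the double dagger of the column matrix $L^\dagger$ produced by the loop) is an FCB with respect to $S$, and that its rowspace is contained in the rowspace of every FCB $L'$ with respect to $S$. The key observation is that the algorithm maintains the invariant that the column space $V := \langle L^\dagger\rangle_c$ always contains $S$ (this holds after line~\ref{alg:init} and is preserved since we only append columns), and that on exit $V$ is \emph{$U$-invariant}: the exit condition is precisely that for every column $\ket{z}$ of $L^\dagger$ we have $P_L U\ket{z} = U\ket{z}$, i.e. $U\ket{z}\in V$; since the columns span $V$ this gives $U V \subseteq V$. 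Conversely, throughout the loop the appended vector $\ket{w} = U\ket{z} - P_L U\ket{z}$ is the component of $U\ket{z}$ orthogonal to the current $V$, so after normalization the columns of $L^\dagger$ stay orthonormal (here one uses that $P_L = L^\dagger L$ is the orthogonal projector onto $V$ when the columns of $L^\dagger$ are orthonormal, which is maintained inductively); hence the output $L$ has orthonormal rows as required by Definition~\ref{def:cons:lump}.

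Next I would show that a $U$-invariant subspace $V\supseteq S$, represented by an orthonormal $L$ with $\langle L^\dagger\rangle_c = V$, actually \emph{is} an FCB with respect to $S$. The condition $S\subseteq\langle L^\dagger\rangle_c$ is immediate. For the trajectory condition $L\ket{w_k} = \ket{\hat w_k}$ for all $k\ge 1$, I would argue by induction on $k$ using $U$-invariance: writing $P_L = L^\dagger L$ for the projector onto $V$, $U$-invariance gives $P_L U P_L = U P_L$, so $\hat U^k = (L U L^\dagger)^k = L U^k L^\dagger$ (telescoping, inserting $L^\dagger L = P_L$ between factors and using $U P_L = P_L U P_L$), which — combined with $\ket{\hat w_0} = L\ket{w_0}$ — yields $L\ket{w_k} = L U^k\ket{w_0} = \hat U^k L\ket{w_0} = \ket{\hat w_k}$. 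Actually one must be slightly careful: Definition~\ref{def:cons:lump}b) only requires the equality for $k\ge 1$, which is exactly what one gets because the reduced initial state is \emph{defined} as $L\ket{w_0}$; the $k=0$ case is trivial and the inductive step just uses $\hat U L U^{k}L^\dagger \cdot L^\dagger = $ and the invariance identity $U P_L = P_L U P_L$. For \emph{minimality}, let $L'$ be any FCB with respect to $S$, with $V' := \langle L'^\dagger\rangle_c \supseteq S$; the FCB trajectory condition for all initial states forces $V'$ to be $U$-invariant (take $\ket{w_0}$ ranging over $S$ and iterate, or invoke the standard argument that the smallest $U$-invariant subspace containing $S$ is $\mathrm{span}\{U^j\ket{s} : j\ge 0,\ \ket{s}\in S\}$). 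Since the algorithm's output $V$ is, by construction, generated from a basis of $S$ by repeatedly adding images under $U$ until closure, $V$ equals this smallest $U$-invariant subspace containing $S$, hence $V\subseteq V'$, which translates into the rowspace containment $\langle L\rangle_r\subseteq\langle L'\rangle_r$.

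For \emph{termination and complexity}: every time a column is appended in line~\ref{alg:append}, the dimension of $V$ strictly increases (the appended $\ket{w}$ is nonzero and orthogonal to the current $V$, as guaranteed by the check in line~\ref{alg:acl:check}), and $\dim V \le N$ throughout, so at most $N - d_0 \le N$ columns are ever appended and the outer \textbf{repeat} loop runs at most $N$ times. Each pass of the inner \textbf{for} loop processes at most $N$ columns, and for each it performs one matrix--vector product $U\ket{z}$, one application of $P_L = L^\dagger L$, a subtraction, and a normalization — each of these is $O(N^2)$ arithmetic operations (the product $P_L U\ket{z}$ is computed as $L^\dagger(L(U\ket{z}))$ without ever forming the $N\times N$ matrix $P_L$). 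This gives an overall bound of $O(N)\cdot O(N)\cdot O(N^2) = O(N^4)$ arithmetic operations over $\CO$, which is polynomial in $N$; one may also note the orthonormal basis computation in line~\ref{alg:init} (e.g.\ Gram--Schmidt on a basis of $S$) is $O(N^3)$. The main obstacle I anticipate is the bookkeeping around the projector identity $P_L U P_L = U P_L$ on exit and the inductive argument that $\hat U^k = L U^k L^\dagger$, since this is what bridges ``$V$ is $U$-invariant'' and ``$L$ satisfies the trajectory condition of Definition~\ref{def:cons:lump}''; everything else (monotonicity of $\dim V$, orthonormality maintenance, the per-iteration cost) is routine. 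A minor subtlety worth flagging is that the normalization in line~\ref{alg:append} is written $\ket{w}/\bra{w}\ket{w}$ rather than $\ket{w}/\sqrt{\braket{w}}$; for the proof one should read it as normalization to unit Euclidean norm so that the columns of $L^\dagger$ remain orthonormal, which is what makes $P_L = L^\dagger L$ the genuine orthogonal projector used throughout.
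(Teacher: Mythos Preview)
Your plan has the right shape, but there is a genuine gap in both the correctness and the minimality steps: you never invoke the unitarity of $U$, and without it the argument does not close. On termination the algorithm produces $V = \langle L^\dagger\rangle_c$ with $UV \subseteq V$. Your telescoping correctly gives $\hat U^k = L U^k L^\dagger$ from $UV\subseteq V$ alone (since $P_L U L^\dagger = U L^\dagger$), but the final step $L U^k \ket{w_0} = \hat U^k L\ket{w_0}$ for \emph{all} $\ket{w_0}$ needs $L U^k = L U^k P_L$, i.e.\ $U^k V^\perp \subseteq V^\perp$. In general $UV\subseteq V$ does not imply $UV^\perp\subseteq V^\perp$; it does for unitary $U$ (then $UV\subseteq V$ forces $UV=V$, hence $U^\dagger V = U^{-1}V = V$, whence $UV^\perp\subseteq V^\perp$). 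The same issue bites in your minimality argument: the FCB trajectory condition for all $\ket{w_0}$ is equivalent to $L'U = \hat U' L'$, i.e.\ $U V'^\perp \subseteq V'^\perp$ (equivalently $U^\dagger V'\subseteq V'$), \emph{not} to $UV'\subseteq V'$. Your parenthetical ``take $\ket{w_0}$ ranging over $S$ and iterate'' does not establish $U$-invariance of $V'$---knowing $L'\ket{w_k}=\ket{\hat w_k}$ says nothing about $\ket{w_k}$ lying in $V'$. Again unitarity is what lets you pass from $U^\dagger$-invariance to $U$-invariance of $V'$, after which $V\subseteq V'$ follows as you say.

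The paper handles this by a different route: it first establishes the duality that $L$ is an FCB iff $L^\dagger$ is a BCB (Theorem~\ref{thm:duality}), using unitarity explicitly in the chain $\langle LU\rangle_r\subseteq\langle L\rangle_r \Leftrightarrow \langle U L^\dagger\rangle_c\subseteq\langle L^\dagger\rangle_c$. It then observes that Algorithm~\ref{alg} builds exactly the smallest $U$-invariant subspace containing $S$, which is the minimal BCB column space by the BCB characterization, and concludes the FCB statement (correctness and minimality) by duality. Your direct approach can be repaired by inserting the unitarity step in both places; alternatively, routing through duality makes the role of unitarity transparent and sidesteps the incomplete sentence around ``$\hat U L U^{k}L^\dagger \cdot L^\dagger = $''. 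The complexity discussion and the remark about the normalization typo in line~\ref{alg:append} are fine.
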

\begin{proof}%[Theorem~\ref{thm:red:measure}]
See proof of Theorem~\ref{thm:duality}.
\end{proof}

We briefly comment on Algorithm~\ref{alg}. The idea behind it exploits the fact that $L$ can be shown to be an FCB whenever $L^\dagger$ is an invariant set of the map $U$, that is, if the column space of $L^\dagger U$ is contained in the column space of $L^\dagger$.

The algorithm begins by initializing $L^\dagger$ with a basis of $S$ in line~\ref{alg:init}. This ensures that $S$ is contained in the column space of the final result. For every column $\ket{z}$ of $L^\dagger$, the main loop in line~\ref{alg:main:loop} checks whether $U \ket{z}$ is in the column space of $L^\dagger$ (line~\ref{alg:acl:check}) by computing its projection $\ket{\pi}$ onto the column space of $L^\dagger$ (line~\ref{alg:acl:proj}). If it is not in the column space, the projection will differ from $U \ket{z}$ and the residual $\ket{w}$ must be added to $L^\dagger$. This shows correctness, while the minimality of FCB $L$ follows from the fact that only the necessary residuals are added to $L^\dagger$. The complexity of the algorithm, instead, follows by noting that at most $N$ columns can be added to $L^\dagger$ and that all computations of the main loop require, similarly to the computation in line~\ref{alg:init}, at most $\mathcal{O}(N^3)$ operations.

\begin{remark}\label{rem:single}
As can be noticed in Algorithm~\ref{alg}, e.g., line 4, the computation of an FCB subsumes the computation of a single step of the circuit. For practical applications to single-step circuits where the modeler is interested in only a single input, FCB may be as expensive as simulating the original circuit directly. Hence, it is obvious that the effectiveness of constrained bisimulations is particularly relevant when simulating the circuit with respect to several inputs, or when considering multi-step applications. Examples of this are provided in Section~\ref{sec:applications} and a numerical evaluation is carried out in Section~\ref{sec:benchmark}.
\end{remark}

\begin{example}\label{ex:lump:compt}
Consider the FCB $L = \ket{\phi}^\dagger$ w.r.t. subspace $S_{\ket{\phi}}$ from Example~\ref{ex:lump}. Then, noting that $(I - P_L) \ket{\phi} = 0$, we infer that Algorithm~\ref{alg} terminates in line~\ref{alg:acl:check}. Hence, $L$ is a minimal FCB w.r.t. $S_{\ket{\phi}}$.
\end{example}
	
\subsection{Backward Constrained Bisimulation} BCB yields a reduced system through the identification of an invariant set, i.e., a subspace $S$ such that $U^k \ket{z} \in S$ for any $\ket{z} \in S$ and $k \geq 1$. Whereas in FCB the reduced model can recover projective measurements onto the constraint set \emph{for any} initial set, here one can recover the full quantum state, so long as the initial states belong to the invariant set.

\begin{definition}[Backward Constrained Bisimulation, BCB]\label{def:dmd}
Let $U, L$ and $\hU$ be as in Definition~\ref{def:cons:lump}. Then, $L^\dagger$ is a BCB of the dynamical system $\ket{w_{k+1}} = U \ket{w_k}$ w.r.t. a subspace of inputs $S \subseteq \CO^N$ when $S \subseteq \langle L^\dagger \rangle_c$ and whenever $\ket{w_0} = L^\dagger \ket{\hat{w}_0}$ implies $\ket{w_k} = L^\dagger \ket{\hat{w}_k}$ for all $k \geq 1$.
\end{definition}

Similarly to FCB, we assume without loss of generality that $L \in \CO^{d \times N}$ has orthonormal rows. % and note that dynamic mode decomposition could be in principle stated for nonlinear complex dynamical systems. 
As anticipated above, FCB and BCB are not comparable in general. In fact, an FCB $L$ does not make any assumptions on the initial condition $\ket{w_0}$, while a BCB $L^\dagger$ does so by requiring $L^\dagger L \ket{w_0} = \ket{w_0}$. Conversely, a BCB $L^\dagger$ allows one to obtain $\ket{w_k}$, while an FCB $L$ allows one to obtain $L \ket{w_k}$ instead of $\ket{w_k}$ itself. 

\begin{example}\label{ex:dmd}
Fix $\ket{\phi} = (1,-1)^T / \sqrt{2}$ from Example~\ref{ex:lump:compt} and recall that $L = \ket{\phi}^\dagger$, $U \ket{\phi} = - \ket{\phi}$ and $\hU = -1$. Then, $L^\dagger$ is a BCB of $U$ w.r.t. $S_{\ket{\phi}}$. In fact, $L^\dagger L \ket{w_0} = \ket{w_0}$ implies $\ket{w_0} = \ket{\phi}$, while
\[
L^\dagger \ket{\hat{w}_k} = (-1)^k L^\dagger \ket{\hat{w}_0} = (-1)^k L^\dagger L \ket{w_0} = (-1)^k \ket{w_0} = U^k \ket{\phi} = \ket{w_k} .
\]
\end{example}

Example~\ref{ex:dmd} anticipates the next result that states FCB and BCB are dual notions. This generalizes the known duality of ordinary and exact lumpability of Markov chains~\cite{Derisavi2003309,DBLP:conf/popl/CardelliTTV16}. 

\begin{theorem}[Duality]\label{thm:duality}
Fix a unitary map $U \in \CO^{N \times N}$ and a subspace $S \subseteq \CO^N$. $L$ is an FCB wrt $S$ if and only if $L^\dagger$ is a BCB wrt $S$.
\end{theorem}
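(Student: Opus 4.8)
The proof revolves around the key characterization suggested by the comments after Algorithm~\ref{alg}: for a matrix $L \in \CO^{d \times N}$ with orthonormal rows, the reduced map $\hat U = L U L^\dagger$ behaves correctly precisely when $\langle L^\dagger \rangle_c$ is $U$-invariant, i.e., $\langle L^\dagger U \rangle_c \subseteq \langle L^\dagger \rangle_c$, equivalently $U L^\dagger = L^\dagger \hat U$ (and by conjugate-transposing, $L U^\dagger = \hat U^\dagger L$, hence $L U = \hat U L$ when $U$ is unitary and $\langle L^\dagger\rangle_c$ is $U$-invariant — one must be slightly careful here, see below). So the plan is: \emph{(i)} prove the equivalence ``$L$ is an FCB wrt $S$'' $\iff$ ``$S \subseteq \langle L^\dagger\rangle_c$ and $\langle L^\dagger\rangle_c$ is $U$-invariant''; \emph{(ii)} prove the analogous equivalence ``$L^\dagger$ is a BCB wrt $S$'' $\iff$ ``$S \subseteq \langle L^\dagger\rangle_c$ and $\langle L^\dagger\rangle_c$ is $U$-invariant''; and \emph{(iii)} observe that the right-hand sides of (i) and (ii) are literally identical, which gives the duality. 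Along the way, step (i) will hand us Lemma~\ref{lem:unit} for free: if $\langle L^\dagger\rangle_c$ is $U$-invariant then $U$ restricted to that subspace is an isometry onto itself (a unitary acting on a finite-dimensional space maps a subspace isometrically into itself, hence onto itself by dimension count), and $\hat U = L U L^\dagger$ is exactly the matrix of that restriction in the orthonormal basis given by the rows of $L$, hence unitary. Finally, step (i) also yields Theorem~\ref{thm:red:measure}: once invariance is the defining property, correctness of Algorithm~\ref{alg} is immediate (it appends exactly the residual $(I-P_L)U\ket{z}$ whenever $U\ket{z}\notin \langle L^\dagger\rangle_c$), minimality follows because every appended vector lies in $\langle L'^\dagger\rangle_c$ for any competing FCB $L'$ by an induction on the loop, and the $\mathcal O(N^3)$-per-iteration / at-most-$N$-iterations bound gives the polynomial complexity.

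**The FCB direction in detail.**

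For the forward direction of (i): suppose $\langle L^\dagger\rangle_c$ is $U$-invariant and $S\subseteq\langle L^\dagger\rangle_c$. Since $P_L = L^\dagger L$ is the orthogonal projector onto $\langle L^\dagger\rangle_c$ and $U$ maps this space into itself, $P_L U L^\dagger = U L^\dagger$, i.e. $L^\dagger \hat U = U L^\dagger$. Now I induct on $k$: $\ket{w_0}$ is arbitrary, $\ket{\hat w_0} = L\ket{w_0}$, and assuming $\ket{\hat w_k} = L\ket{w_k}$ for some $k\ge 0$ we need $\ket{\hat w_{k+1}} = L\ket{w_{k+1}}$, i.e. $\hat U L \ket{w_k} = L U \ket{w_k}$. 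This requires $\hat U L = L U$ on the relevant vectors — but note $L\ket{w_0}$ need not see all of $U\ket{w_0}$. The clean way: observe that Definition~\ref{def:cons:lump}(b) only asks for $L\ket{w_k} = \ket{\hat w_k}$ for $k\ge 1$, so really I should track $\ket{w_k}$ for $k\ge 1$; and here is where invariance of $\langle L^\dagger\rangle_c$ under $U$ does the work once we pass through $P_L$. Concretely: $L U \ket{w_k} = L U P_L \ket{w_k} + L U (I-P_L)\ket{w_k}$; the first term equals $L U L^\dagger L \ket{w_k} = \hat U L \ket{w_k}$; the delicate point — the \emph{main obstacle} — is showing the second term vanishes, i.e. $L U (I-P_L) = 0$, equivalently $\langle U^\dagger L^\dagger\rangle_c \subseteq \langle L^\dagger\rangle_c$, i.e. $U^\dagger$ also leaves $\langle L^\dagger\rangle_c$ invariant. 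This is exactly the place to use unitarity: a subspace invariant under a unitary $U$ need not a priori be invariant under $U^\dagger$ unless it is finite-dimensional — which it is — so $U|_{\langle L^\dagger\rangle_c}$ is a surjective isometry, hence its inverse $U^\dagger|_{\langle L^\dagger\rangle_c}$ also preserves the subspace. I would isolate this as the crux and state it as a short sub-lemma: \emph{for unitary $U$ on $\CO^N$, a subspace $V$ is $U$-invariant iff it is $U^\dagger$-invariant}. With that, $L U (I-P_L) = 0$, the induction closes, $P_L \ket{w_k} = L^\dagger \ket{\hat w_k}$ follows, and $L$ is an FCB. Conversely, if $L$ is an FCB wrt $S$ then $S\subseteq\langle L^\dagger\rangle_c$ by definition, and taking $\ket{w_0}$ ranging over a basis of $\langle L^\dagger\rangle_c$ (so $P_L\ket{w_0} = \ket{w_0}$, $\ket{w_1} = U\ket{w_0}$, and $L^\dagger L \ket{w_1} = L^\dagger\ket{\hat w_1} = L^\dagger \hat U \ket{\hat w_0} = L^\dagger \hat U L \ket{w_0}$) forces $U\ket{w_0} = L^\dagger\hat U L\ket{w_0} \in \langle L^\dagger\rangle_c$ — wait, one needs $P_L U\ket{w_0} = U\ket{w_0}$, which is precisely the FCB condition $L\ket{w_1}=\ket{\hat w_1}$ multiplied by $L^\dagger$; so $U$ maps a spanning set of $\langle L^\dagger\rangle_c$ into $\langle L^\dagger\rangle_c$, i.e. the subspace is $U$-invariant.

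**The BCB direction and conclusion.**

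For (ii): if $\langle L^\dagger\rangle_c$ is $U$-invariant (equivalently, by the sub-lemma, $U^\dagger$-invariant) and $\ket{w_0} = L^\dagger\ket{\hat w_0}\in\langle L^\dagger\rangle_c$, then all iterates $\ket{w_k} = U^k\ket{w_0}$ stay in $\langle L^\dagger\rangle_c$, so $\ket{w_k} = L^\dagger L \ket{w_k}$; it remains to check $L\ket{w_k} = \ket{\hat w_k}$, which again follows by induction from $L U L^\dagger = \hat U$ together with $L^\dagger L\ket{w_k} = \ket{w_k}$ and the identity $L U \ket{w_k} = L U L^\dagger L \ket{w_k} = \hat U L\ket{w_k}$. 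Conversely, if $L^\dagger$ is a BCB wrt $S$, feed it initial states $\ket{w_0} = L^\dagger\ket{\hat w_0}$ with $\ket{\hat w_0}$ ranging over the standard basis of $\CO^d$; then $\ket{w_1} = U\ket{w_0}$ must equal $L^\dagger\ket{\hat w_1}\in\langle L^\dagger\rangle_c$, so $U$ maps a basis of $\langle L^\dagger\rangle_c$ into $\langle L^\dagger\rangle_c$, i.e. invariance holds; and $S\subseteq\langle L^\dagger\rangle_c$ is part of the definition. Since the conditions ``$S\subseteq\langle L^\dagger\rangle_c$ and $\langle L^\dagger\rangle_c$ is $U$-invariant'' appearing on the right-hand side of (i) and of (ii) are verbatim the same and depend only on $U$, $S$, and the column space of $L^\dagger$ (not on any forward/backward distinction), the equivalence ``$L$ is an FCB wrt $S$ $\iff$ $L^\dagger$ is a BCB wrt $S$'' is immediate, completing the proof of Theorem~\ref{thm:duality}. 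I expect the only genuinely non-routine ingredient to be the sub-lemma that a subspace invariant under a finite-dimensional unitary is also invariant under its adjoint — everything else is bookkeeping with $P_L = L^\dagger L$, $L L^\dagger = I_d$, and a one-line induction.
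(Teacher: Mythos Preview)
Your overall strategy---characterize both FCB and BCB by ``$S\subseteq\langle L^\dagger\rangle_c$ and $\langle L^\dagger\rangle_c$ is $U$-invariant'' and observe that the two right-hand sides coincide---matches the paper's approach, and your sub-lemma (a subspace of $\CO^N$ is $U$-invariant iff $U^\dagger$-invariant, for unitary $U$) is exactly what the paper's chain of row/column-space equivalences encodes. The forward direction of (i), both directions of (ii), and your derivation of Lemma~\ref{lem:unit} are all fine.

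There is, however, a genuine gap in the converse of step~(i). You restrict $\ket{w_0}$ to a basis of $\langle L^\dagger\rangle_c$ and try to deduce $U\ket{w_0}\in\langle L^\dagger\rangle_c$, but the reasoning is circular: multiplying $L\ket{w_1}=\ket{\hat w_1}$ on the left by $L^\dagger$ yields only $P_L U\ket{w_0}=L^\dagger\ket{\hat w_1}$, which is the identity you had already written down, \emph{not} the needed $P_L U\ket{w_0}=U\ket{w_0}$. In fact, for $\ket{w_0}=L^\dagger c\in\langle L^\dagger\rangle_c$ the $k=1$ FCB condition is a tautology ($LUL^\dagger c=\hat U c$ by definition of $\hat U$), so those initial states carry no information whatsoever. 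The fix is to use that FCB holds for \emph{every} $\ket{w_0}\in\CO^N$: the $k=1$ case then gives the operator identity $LU=\hat U L$; taking adjoints yields $U^\dagger L^\dagger=L^\dagger\hat U^\dagger$, i.e.\ $\langle L^\dagger\rangle_c$ is $U^\dagger$-invariant, and now your sub-lemma finishes. (The paper sidesteps this step by importing the row-space characterization $\langle LU\rangle_r\subseteq\langle L\rangle_r$ from prior work, which is precisely the statement $LU=\hat U L$.)
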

\begin{proof}%[Theorem~\ref{thm:red:measure}-\ref{thm:duality}]
Let $S_0 \subseteq S$ be a basis of some fixed $S \subseteq \CO^N$. We first note that the discussion of~\cite{ovchinnikov2021,jimenez2022,leguizamon2023} and~\cite{DMD2009,antoulas} can be directly extended to the complex field. With this, we obtain:
\begin{enumerate}
    \item $L \in \CO^{d \times N}$ is an FCB wrt $S$ if and only if $\langle L U \rangle_r \subseteq \langle L \rangle_r$ with $\langle S^\dagger_0 \rangle_r \subseteq \langle L \rangle_r$.
    \item $D \in \CO^{N \times d}$ is a BCB wrt $S$ if and only if $\langle U D \rangle_c \subseteq \langle D \rangle_c$ with $\langle S_0 \rangle_c \subseteq \langle D \rangle_c$.
\end{enumerate}
Moreover, we observe the following:
\begin{align*}
\langle L U \rangle_r & \subseteq \langle L \rangle_r \Leftrightarrow [U \text{ bijection}] \\
\langle L U \rangle_r & = \langle L \rangle_r \Leftrightarrow [\text{daggering}] \\
\langle U^\dagger L^\dagger \rangle_c & = \langle L^\dagger \rangle_c \Leftrightarrow [\text{$U$ unitary}] \\
\langle U^{-1} L^\dagger \rangle_c & = \langle L^\dagger \rangle_c \Leftrightarrow [U \text{ bijection}] \\
\langle L^\dagger \rangle_c & = \langle U L^\dagger \rangle_c \Leftrightarrow [U \text{ bijection}] \\
\langle U L^\dagger \rangle_c & \subseteq \langle L^\dagger \rangle_c
\end{align*}
This yields Theorem~\ref{thm:duality}, i.e., $L \in \CO^{d \times N}$ is an FCB of $U$ wrt constraint $S$ if and only if $L^\dagger \in \CO^{N \times d}$ is a BCB of $U$ wrt $S$ (because $S_0^{\dagger \dagger} = S_0$). Moreover, if $L^\dagger$ is computed by Algorithm~\ref{alg}, then $L^\dagger$ is a BCB wrt $S$, while $L^{\dagger \dagger}$ is an FCB wrt $S$. This follows by noting that in such a case $L^\dagger \in \CO^{N \times d}$ satisfies
\begin{align*}
\langle L^\dagger \rangle_c & = \langle U^k \ket{z} \mid 0 \leq k \leq N-1, \ket{z} \in S \rangle_c \\
& = \langle U^k \ket{z} \mid 0 \leq k \leq N-1, \ket{z} \in S_0 \rangle_c
\end{align*}
The complexity follows from the discussion after Theorem~\ref{thm:red:measure}. A detailed complexity discussion can be obtained in~\cite{ovchinnikov2021}. %(For self-containedness, we observe that this yields FCB. Indeed, for any input $\ket{w_0} \in \CO^N$, we observe that $\ket{\hat{w}_1} = \hU L \ket{w_0} = L U \ket{w_0} = L \ket{w_1}$. Repeating the argument for $\ket{w_1}$ yields the statement for $k \geq 2$.)
Exploiting that an FCB $L$ satisfies $L U L^\dagger L = L U$ by~\cite{tomlin1997effect}, we obtain
\[
(L U L^\dagger)^\dagger (L U L^\dagger) = (L U^\dagger L^\dagger) (L U L^\dagger) = L U^\dagger U L^\dagger = L L^\dagger = I_{d \times d} ,
\]
showing that $\hU$ is unitary.
\end{proof}

In light of the above result, we often speak of a (constrained bisimulation) reduction. Moreover, we note that Theorem~\ref{thm:duality} ensures that a BCB reduction up to input yields an FCB reduction up result, which is discussed next.

\begin{remark}
Let $L^\dagger$ be the BCB of $U$ w.r.t. $S_{\ket{w_0}}$, where $\ket{w_0}$ is the input. Then, $L = L^{\dagger \dagger}$ is an FCB wrt $S_{\ket{w_0}}$, implying that $P_L = L^\dagger L$ identifies the column space of $L^\dagger$, see discussion after Definition~\ref{def:cons:lump}. At the same time, the result $U^k \ket{w_0}$ is in the column span of $L^\dagger$ because $U^k \ket{w_0} = \ket{w_k} = L^\dagger \ket{\hat{w}_k} = L^\dagger \hU L \ket{w_0}$.
\end{remark}

We end the section by pointing out that, thanks to Theorem~\ref{thm:duality}, Algorithm~\ref{alg} can be used to compute a minimal BCB $L^\dagger$ w.r.t. subspace $S$. In fact, the only difference is that one should return $L^\dagger$ instead of $L^{\dagger \dagger}$ in the last line of the algorithm. With this change, we notice that Algorithm~\ref{alg} coincides, in the case of a one-dimensional subspace $S \subseteq \CO^N$, with the Krylov subspace~\cite{antoulas} that can be obtained by the Arnoldi iteration~\cite{DMD2009}.

\subsection{Lumpings with Decision Diagrams}

The next result ensures that decision diagrams~\cite{DBLP:journals/tcad/NiemannWMTD16} can be used to boost the computation of quantum lumpings. 

\begin{theorem}[Lumpings with Decision Diagrams]\label{thm:dd} 
Fix a quantum circuit $C$ %\Mirco{do we need $C$?} \Max{DDSIM works on circuits} 
that induces the unitary map $U \in \CO^{N \times N}$ and let $S_{\ket{z}} \subseteq \CO^N$ be the subspace spanned by $\ket{z} \in \CO^N$. Assume that the quantum lumping $L$ of $U$ w.r.t. $S_{\ket{z}}$ has dimension $d$ and that $U^1 \ket{z}, \ldots, U^d \ket{z}$ can be computed and stored as decision diagrams in time and space $\calO(s)$ using circuit $C$.
\begin{enumerate}
    \item Without assuming knowledge of $d$, the quantum lumping $L$ and the reduced unitary map $\hU$ can be computed in time $\calO(s d^2)$ and space $\calO(s d + d^2)$.
    \item Provided that a vector $\ket{v} \in \CO^N$ is given in terms of a decision diagram of size $\calO(s)$, vector $L^\dagger \hat{U}^k L \ket{v}$ can be computed, for any $k \geq 1$, in $\calO(sd + k d^2)$ steps and in $\calO(sd)$ space.
\end{enumerate}
\end{theorem}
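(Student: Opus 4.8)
The plan is to prove the two claims by carefully tracking the cost of each operation in Algorithm~\ref{alg}, replacing every vector operation by its decision-diagram counterpart, whose cost is controlled by the bound $\calO(s)$ from the hypothesis together with the dimension $d$. The key observation, already recorded in the proof of Theorem~\ref{thm:duality}, is that the column space of $L^\dagger$ produced by Algorithm~\ref{alg} equals $\langle U^k\ket{z}\mid 0\le k\le N-1\rangle_c$, and since the lumping has dimension $d$, it in fact equals $\langle\ket{z},U\ket{z},\dots,U^{d-1}\ket{z}\rangle_c$; equivalently, the Krylov vectors $U^1\ket{z},\dots,U^d\ket{z}$ (together with $\ket{z}$) already span it, and the algorithm halts after appending at most $d$ columns. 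Thus the hypothesis that $U^1\ket{z},\dots,U^d\ket{z}$ are computable and storable as decision diagrams of size $\calO(s)$ is exactly what is needed to run the algorithm, and crucially we never have to apply $U$ to an arbitrary DD — we only ever need the fixed sequence of Krylov DDs, each of which we (re)generate from the circuit $C$ at cost $\calO(s)$.

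\textbf{Part (1).} First I would fix, once and for all, the orthonormalization strategy: instead of storing $L^\dagger$ as the raw Krylov DDs, maintain an orthonormal family $\ket{q_0},\dots,\ket{q_{j}}$ of DDs obtained by Gram--Schmidt, exactly as Algorithm~\ref{alg} does in lines~\ref{alg:orth:comp}--\ref{alg:append}. In iteration $i$ (for $i=1,\dots,d$), compute the next Krylov vector $U^i\ket{z}$ as a DD of size $\calO(s)$ using the circuit $C$; then compute its projection onto the current span via $\ket{\pi}=\sum_{l\le j}\braket{q_l}{U^i\ket{z}}\,\ket{q_l}$, which requires $\calO(j)$ scalar products and $\calO(j)$ DD additions/scalings, each of cost $\calO(s)$ per the discussion preceding Section~\ref{sec:prel} (and using that intermediate DDs stay of size $\calO(s)$ because their span has dimension $\le d$, hence so do partial sums — a point to justify carefully). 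So one iteration costs $\calO(sd)$, and over the at most $d$ iterations the total time is $\calO(sd^2)$. The matrix $\hU=LUL^\dagger$ has entries $\hU_{ab}=\braket{q_a}{U q_b}$; each $Uq_b$ is again obtainable as a DD of size $\calO(s)$ from $C$ (apply the circuit to the DD $\ket{q_b}$, whose span lies in the $d$-dimensional invariant subspace), and each of the $d^2$ scalar products costs $\calO(s)$, giving an additional $\calO(sd^2)$. For space: we keep $d$ DDs $\ket{q_l}$ of size $\calO(s)$ — that is $\calO(sd)$ — plus the $d\times d$ matrix $\hU$ — that is $\calO(d^2)$ — plus $\calO(1)$ working DDs per step, so $\calO(sd+d^2)$ total. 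Since termination is detected exactly when $\ket{\pi}=U^{i}\ket{z}$ (line~\ref{alg:acl:check}), $d$ need not be known in advance.

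\textbf{Part (2).} Given $\ket{v}$ as a DD of size $\calO(s)$: computing $L\ket{v}\in\CO^d$ means computing the $d$ inner products $\braket{q_l}{v}$, each $\calO(s)$, so $\calO(sd)$ time; this uses the orthonormal DDs $\ket{q_l}$ from Part~(1), which we assume available (or recompute in $\calO(sd^2)$, a one-time cost). Then $\hU^k(L\ket{v})$ is a purely $d$-dimensional computation: $k$ successive matrix--vector products with the $d\times d$ matrix $\hU$, i.e.\ $\calO(kd^2)$ arithmetic operations and $\calO(d^2)$ space. Finally $L^\dagger\big(\hU^k L\ket{v}\big)=\sum_{l< d}(\hU^kL\ket{v})_l\,\ket{q_l}$ is a linear combination of $d$ DDs of size $\calO(s)$, costing $\calO(sd)$ in time and producing a DD of size $\calO(sd)$ (it lies in the $d$-dimensional invariant subspace, so in fact one can argue it stays $\calO(s)$, but $\calO(sd)$ suffices for the claim). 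Summing: time $\calO(sd+kd^2)$, space $\calO(sd)$ (absorbing $d^2\le sd$ when $d\le s$, which holds since a $d$-dimensional span is generated by DDs of total size $\ge d$; if one prefers not to assume this, state the space bound as $\calO(sd+d^2)$).

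\textbf{The main obstacle} is controlling the sizes of the \emph{intermediate} decision diagrams. The hypothesis only bounds the DDs representing the Krylov vectors $U^i\ket{z}$ and (implicitly) $\ket{v}$; it says nothing a priori about the DD sizes of partial Gram--Schmidt sums $\ket{\pi}$, of residuals $\ket{w}$, or of the output vector $L^\dagger\hU^kL\ket{v}$. The clean way around this is to invoke that all these vectors lie in the fixed $d$-dimensional subspace $\calK=\langle\ket{z},U\ket{z},\dots,U^{d-1}\ket{z}\rangle_c$ and argue that DD addition of two DDs whose sum lies in a low-dimensional ambient structure cannot blow up uncontrollably — but strictly speaking DD size is not subadditive in general, so the honest bound from the addition routine is that $\mathrm{DD}(\ket{a}+\ket{b})=\calO(\mathrm{DD}(\ket a)+\mathrm{DD}(\ket b))$, which across $\calO(d)$ additions could in principle reach $\calO(sd)$ rather than $\calO(s)$. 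This is why the stated complexities carry the factor-$d$ slack in the space bound ($\calO(sd+d^2)$) and why the time bounds are $\calO(sd^2)$ rather than $\calO(sd)$: each step manipulates DDs of size up to $\calO(sd)$...

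\textbf{wait.} If intermediate DDs reach size $\calO(sd)$, then a single DD operation costs $\calO(sd)$, and $\calO(d^2)$ operations would give $\calO(sd^3)$, not matching the claimed $\calO(sd^2)$. Let me reconsider: the resolution must be that at each Gram–Schmidt step we keep only \emph{one} current residual/projection DD, and the relevant additions are of the form "current accumulator (size $\calO(sd)$, or rather size bounded by something) plus one fresh $\ket{q_l}$ (size $\calO(s)$)", done $\calO(d)$ times, giving accumulator size $\calO(sd)$ and per-step cost $\calO(sd)$, hence $\calO(sd)$ per outer iteration for the projection... times $d$ outer iterations $=\calO(sd^2)$. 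That works. The honest statement to put in the proof is therefore: intermediate DDs are of size $\calO(sd)$, each elementary DD operation costs $\calO(sd)$, there are $\calO(d)$ of them per outer iteration and $\calO(d)$ outer iterations, for $\calO(sd^2)$ total — and the extra $d^2$ in the space bound of Part~(1) accounts for the reduced matrix. This is the point that must be argued most carefully; everything else is bookkeeping.
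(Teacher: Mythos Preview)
Your overall strategy---run (modified) Gram--Schmidt on the Krylov sequence $\ket{z},U\ket{z},\dots,U^d\ket{z}$, implementing every vector operation on decision diagrams---is exactly the paper's approach. The paper phrases it directly as MGS on $u_i:=U^i\ket{z}$ rather than as Algorithm~\ref{alg}, but the computation is the same.

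Where you diverge from the paper is precisely at the point you flag as the ``main obstacle''. The paper does \emph{not} attempt your conservative $\calO(sd)$ accounting for intermediate DDs; it simply invokes (citing the decision-diagram reference) that the sum and scalar multiple of two DDs of size $\calO(s)$ is again a DD of size $\calO(s)$. Taking this as a black box, an easy induction shows that every intermediate MGS vector $v_k^{(j)}$ and every $v_k$ stays of size $\calO(s)$; then each of the $\calO(d)$ elementary operations in one outer iteration costs $\calO(s)$, and $d$ outer iterations give the claimed $\calO(sd^2)$. Your alternative arithmetic in the last paragraph does not work: if intermediate accumulators reach size $\calO(sd)$, then a single outer iteration already costs $\sum_{j\le d}\calO(js)=\calO(sd^2)$ (not $\calO(sd)$ as you wrote), and $d$ outer iterations would give $\calO(sd^3)$. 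Moreover, under your conservative view the orthonormal vectors $\ket{q_l}$ themselves are linear combinations of up to $l$ Krylov DDs and so would have size $\calO(sl)$, not the $\calO(s)$ you assume for them. In short, to obtain the stated bound you must accept the size-preservation claim for DD addition and scalar multiplication, as the paper does; there is no free workaround.

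One further gap: for the entries of $\hU$ you propose applying the circuit $C$ to each $\ket{q_b}$ to obtain $U\ket{q_b}$ as a DD of size $\calO(s)$. The hypothesis only guarantees this for the specific vectors $U^i\ket{z}$, not for arbitrary DDs in the invariant subspace, so this step is unjustified. The paper avoids it entirely: it reads off $\hU$ from the inner products $\langle u_j,v_i\rangle$, which are computed during MGS from the Krylov DDs $u_j$ and the orthonormalized $v_i$---no fresh application of $U$ is needed.
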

\begin{proof}
Assuming that two vectors of $\CO^N$ are represented by decision diagrams of size $\calO(s)$, it can be shown~\cite{wille2023hamiltonian} that their sum, scalar product and scalar multiplication can be computed in time $\calO(s)$ and stored as decision diagrams of size $\calO(s)$. Next, we prove the special case $\kappa = 1$.

To show 1., we set $u_i := U^i \ket{z}$ for $i \geq 0$ and note that, by assumption, $u_d$ is a linear combination of $u_0,\ldots,u_{d-1}$. Hence, if applied to $u_0,\ldots,u_d$, the (numerically stable) modified Gram-Schmidt (MGS) method computes a sequence $v_0,\ldots,v_{d-1},v_d \in \CO^N$ with $v_d = 0$ and $v_i \neq 0$ for $0 \leq i < d$, allowing one thus to determine $d$ during execution. Moreover, we can set $L = (v_0,\ldots,v_{d-1})^\dagger$ because it has orthonormal rows and the same row space as $(u_0,\ldots,u_{d-1})^\dagger$. With this, we next assume that $u_0,\ldots,u_{d-1}$ are available as decision diagrams of size $\calO(s)$ and argue that MGS can be implemented over decision diagrams such that it runs in time $\calO(s d^2)$ and stores $v_0,\ldots,v_{d-1}$ as decision diagrams of size $\calO(s)$. To see this, recall that $v_k$ is arises from $v_0,\ldots,v_{k-1}$ via $v_k^{(k-1)} / \langle v_k^{(k-1)}, v_k^{(k-1)} \rangle$, where
\begin{align*}
v_k^{(1)} & = u_k - \langle u_k , v_1 \rangle v_1 \\
v_k^{(2)} & = v_k^{(1)} - \langle v_k^{(1)} , v_2 \rangle v_2 \\
& \vdots \\
v_k^{(k-1)} & = v_k^{(k-2)} - \langle v_k^{(k-2)} , v_{k-1} \rangle v_{k-1}
\end{align*}
Hence, assuming (as inductive hypothesis) that $v_0,\ldots,v_{k-1}$ can be stored as decision diagrams of size $\calO(s)$, it follows that $v_k^{(1)}, v_k^{(2)}, \ldots, v_k^{(k-1)}$ can be also stored as decision diagrams of size $\calO(s)$. This, in turn, implies that $v_k$ can be computed in $\calO(s d)$ time and space. Since this has to be done for all $k = 0,\ldots,d-1$, we obtain a time complexity of $\calO(s d^2)$ and a space complexity of $\calO(s d + d^2)$ because we need to store $\hU$ and the computation of each $v_k$ reuses the previously computed $v_0,\ldots,v_{k-1}$. Once $v_0,\ldots,v_{d-1}$ have been obtained, the reduced matrix $\hU$ can be computed in time $\calO(s d^2)$ and space $\calO(sd + d^2)$ because $\hU_{i,j} = \langle u_j , v_i \rangle$.

We next turn to claim 2. To this end, we first note that the vector $L \ket{v} \in \CO^d$ can be obtained by computing $d$ scalar products between decision diagrams of size $\calO(s)$, thus giving rise to a time and space complexity of $\calO(s d)$. Once the vector $L \ket{v}$ is known, computing $\hU^k L \ket{z}$ can be done in $\calO(k d^2)$ time and $\calO(d^2)$ space using common matrix operations over $\CO^d$. Finally, for any vector $\ket{\hat{z}} \in \CO^d$, vector $L^\dagger \ket{\hat{z}}$ can be represented as a decision diagram of size $\calO(s)$ by performing $d$ scalar multiplications and additions over decision diagrams of size $\calO(s)$, yielding a time and space complexity $\calO(s d)$.
\end{proof}

Provided that the sequence $U^1 \ket{z}, U^2 \ket{z}, \ldots$ can be computed using decision diagrams whose size is polynomial in the number of qubits, Theorem~\ref{thm:dd} ensures that the quantum lumping of $U$ w.r.t. $S_{\ket{z}}$ can be computed with time and space requirements that are polynomial in the number of qubits. 

At the expense of higher complexity, Theorem~\ref{thm:dd} can be extended to quantum lumpings w.r.t. arbitrary subspaces, as stated next.

\begin{corollary}\label{cor:dd}
Fix a quantum circuit $C$ that induces the unitary map $U \in \CO^{N \times N}$ and let $S \subseteq \CO^N$ be a subspace spanned by $\ket{z_1},\ldots,\ket{z_\kappa} \in \CO^N$. Assume that each quantum lumping $L_l$ of $U$ w.r.t. $S_{\ket{z_l}}$ has dimension $d_l$ and assume that all $U^1 \ket{z_l}, \ldots, U^{d_l} \ket{z_l}$ can be computed and stored as decision diagrams in time and space $\calO(s)$ using $C$. Then, 1. and 2. of Theorem~\ref{thm:dd} carry over for $d = \sum_l d_l$.
\end{corollary}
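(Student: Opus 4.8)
The plan is to reduce the statement to Theorem~\ref{thm:dd} by observing that the column space of the minimal lumping with respect to $S$ decomposes into the column spaces of the $\kappa$ individual lumpings. Concretely, let $L$ denote the minimal FCB of $U$ with respect to $S$ and $L_l$ the minimal FCB with respect to $S_{\ket{z_l}}$. By the invariant-subspace characterization established in the proof of Theorem~\ref{thm:duality}, the column space of $L^\dagger$ equals $\langle U^k \ket{z} \mid 0 \le k \le N-1, \ket{z} \in S \rangle_c$; since $U$ is linear, this equals $\sum_{l=1}^{\kappa} \langle U^k \ket{z_l} \mid 0 \le k \le N-1 \rangle_c = \sum_{l=1}^{\kappa} \langle L_l^\dagger \rangle_c$, and each summand has dimension $d_l$. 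Moreover, because the Krylov sequence $\ket{z_l}, U\ket{z_l}, U^2\ket{z_l}, \ldots$ stabilizes as soon as a vector becomes linearly dependent on its predecessors, the first $d_l$ vectors $\ket{z_l}, U\ket{z_l}, \ldots, U^{d_l-1}\ket{z_l}$ already span $\langle L_l^\dagger \rangle_c$. Hence a spanning set of $\langle L^\dagger \rangle_c$ of cardinality $d = \sum_l d_l$ is available, and by hypothesis each of its elements, together with the closing vectors $U^{d_l}\ket{z_l}$, is a decision diagram of size $\calO(s)$ obtainable from $C$ in total time and space $\calO(sd)$ (note $\kappa \le d$).

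Next I would run the decision-diagram modified Gram-Schmidt procedure from the proof of Theorem~\ref{thm:dd} on the concatenation of these $d$ vectors. Exactly as there, it produces nonzero orthonormal vectors $v_0, \ldots, v_{d'-1}$ with $d' \le d$, each stored as a decision diagram of size $\calO(s)$, in time $\calO(sd^2)$ and space $\calO(sd + d^2)$; the only difference with the single-generator case is that the zero residuals now additionally encode the linear dependencies between the $\kappa$ invariant subspaces, so that $d'$ is the true reduced dimension (possibly strictly smaller than the declared bound $d$). Setting $L = (v_0 \mid \ldots \mid v_{d'-1})^\dagger$ yields the minimal FCB with respect to $S$, and $\hU = L U L^\dagger$ is recovered from this orthonormal basis and the (small) Krylov vectors precisely as in Theorem~\ref{thm:dd}, within the same bounds. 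This establishes claim~1 with $d = \sum_l d_l$. Claim~2 is then verbatim the three-stage argument of Theorem~\ref{thm:dd}: compute $L\ket{v} \in \CO^{d'}$ via $d'$ scalar products of decision diagrams of size $\calO(s)$, iterate $\hU^k L\ket{v}$ with dense $\CO^{d'}$-arithmetic, and form $L^\dagger(\hU^k L\ket{v})$ via $d'$ scalar multiplications and additions over decision diagrams of size $\calO(s)$, giving $\calO(sd + kd^2)$ time and $\calO(sd)$ space.

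The only genuinely new ingredient, and the one I expect to require the most care, is the column-space decomposition $\langle L^\dagger \rangle_c = \sum_l \langle L_l^\dagger \rangle_c$ together with the claim that $d_l$ powers of $U$ suffice per generator. Both reduce to the invariant-subspace characterization in the proof of Theorem~\ref{thm:duality} and to $U$ being a bijective linear map, so no new machinery is needed; once this is in place, the remainder is a transparent re-run of the proof of Theorem~\ref{thm:dd} over the combined basis, with the Gram-Schmidt pass doing double duty as the mechanism that prunes the redundancy across the $\kappa$ blocks.
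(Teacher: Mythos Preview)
Your proposal is correct and follows essentially the same route as the paper, with one minor structural difference. The paper first invokes Theorem~\ref{thm:dd} as a black box on each generator $\ket{z_l}$ separately to obtain the orthonormal columns of $L_l^\dagger$, and only afterwards runs a second MGS pass over the union $L_1^\dagger,\ldots,L_\kappa^\dagger$ to prune the cross-block redundancy; you instead concatenate the raw Krylov vectors $\{U^i\ket{z_l}\}$ and run a single MGS over all of them at once. Both variants yield the same complexity bounds and the same final $L$. Your version is marginally more direct, while the paper's is more modular in that it literally reuses the procedure of Theorem~\ref{thm:dd} without modification; your additional justification of the column-space decomposition $\langle L^\dagger\rangle_c = \sum_l \langle L_l^\dagger\rangle_c$ via the invariant-subspace characterization is left implicit in the paper.
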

\begin{proof}
We begin with statement 1. We apply the procedure from Theorem~\ref{thm:dd} to each $\ket{z_l}$ in isolation, thus computing a quantum lumping $L_l$ of $U$ wrt $S_{\ket{z_l}}$. The overall time and space complexity is $\calO(\sum_l s d_l^2)$ and $\calO(\sum_l s d_l)$, respectively. Afterwards, we apply MGS to the union of all columns $L_1^\dagger, \ldots, L_\kappa^\dagger$. This comes with time complexity $\calO(s (\sum_l d_l)^2)$ and space complexity $\calO(\sum_l s d_l)$. Statement 2. is shown as in Theorem~\ref{thm:dd}.
\end{proof}

\section{Applications}\label{sec:applications}

In this section, we demonstrate that established quantum algorithms enjoy substantial bisimulation reductions. For each application, we provide a brief description of the quantum algorithm and a theoretical bound on its reduction.

\subsection{Quantum Search}\label{sec:qsearch}

Let us assume that we are given a non-zero function $f : \{0,1\}^n \to \{0,1\}$ and that we are asked to find some $x \in \{0,1\}^n$ such that $f(x) = 1$. Grover's seminal algorithm describes how this can be achieved in $\mathcal{O}(\sqrt{N})$ steps on a quantum computer~\cite[Section 6.2]{nielsen00}, thus yielding a quadratic speed-up over a classic computer. For any $x \in \{0,1\}^n$, the Grover map is given by

\begin{align}\label{equ:grover}
G \ket{x} = (-1)^{f(x)} (I - 2 \ket{\psi} \bra{\psi}) \ket{x} , \quad \text{with} \ \
\ket{\psi} = \frac{1}{\sqrt{N}} \sum_{x = 0}^{N-1} \ket{x}
\end{align}
The Grover map yields the following celebrated result.

\begin{theorem}[Quantum Search~\cite{nielsen00}]\label{thm:grover}
Map $G$ is unitary. Moreover, if the number of sought solutions $M = |\{ x \mid f(x) = 1 \}|$ satisfies $M \leq N/2$, then measuring $G^\kappa \ket{\psi}$ for $\kappa = \lceil \frac{\pi}{4} \sqrt{N/M} \rceil$ yields a state $\ket{x}$ satisfying $f(x) = 1$ with probability at least $\frac{1}{2}$.
\end{theorem}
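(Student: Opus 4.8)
The plan is the classical rotation argument. For unitarity, I would factor $G = R\,O$, where $O = \sum_x (-1)^{f(x)}\ket{x}\bra{x}$ is the diagonal oracle and $R = I - 2\ket{\psi}\bra{\psi}$; since $O$ has unimodular diagonal entries and $R$ is a Householder reflection in the unit vector $\ket{\psi}$ (so $R = R^\dagger$ and $R^2 = I$), we get $G^\dagger G = O^\dagger R^2 O = I$, i.e.\ $G$ is unitary. I would then confine the analysis to the two-dimensional subspace carrying the dynamics: setting $\ket{\beta} = M^{-1/2}\sum_{f(x)=1}\ket{x}$, $\ket{\alpha} = (N-M)^{-1/2}\sum_{f(x)=0}\ket{x}$, and $\theta\in(0,\pi/2]$ by $\sin(\theta/2)=\sqrt{M/N}$ (here $\theta\le\pi/2$ holds precisely because $M\le N/2$), one has $\ket{\psi} = \cos(\theta/2)\ket{\alpha} + \sin(\theta/2)\ket{\beta}$. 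On $\mathrm{span}\{\ket{\alpha},\ket{\beta}\}$, which is $G$-invariant, $O$ reflects about $\ket{\alpha}$ and $R$ reflects about $\ket{\psi}$; composing the two reflections, $G$ acts as the rotation by angle $\theta$ toward $\ket{\beta}$.

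Iterating this rotation yields $G^\kappa\ket{\psi} = \cos\!\big(\tfrac{2\kappa+1}{2}\theta\big)\ket{\alpha} + \sin\!\big(\tfrac{2\kappa+1}{2}\theta\big)\ket{\beta}$. Since $\ket{\beta}$ is the uniform superposition over the $M$ solutions, a computational-basis measurement returns some $x$ with $f(x)=1$ exactly with the squared weight on $\ket{\beta}$, so the success probability equals $\sin^2\!\big(\tfrac{2\kappa+1}{2}\theta\big)$. The theorem therefore reduces to the purely trigonometric claim that, for the stated $\kappa = \lceil\tfrac{\pi}{4}\sqrt{N/M}\rceil$, the angle $\phi_\kappa := \tfrac{2\kappa+1}{2}\theta$ satisfies $\sin^2\phi_\kappa\ge\tfrac12$, equivalently $\phi_\kappa\in[\tfrac{\pi}{4},\tfrac{3\pi}{4}]$.

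For the lower end I would use $\theta = 2\arcsin\sqrt{M/N}\ge 2\sqrt{M/N}$ together with $\kappa\ge\tfrac{\pi}{4}\sqrt{N/M}$ to obtain $\phi_\kappa \ge \tfrac{\pi}{4}\sqrt{N/M}\,\theta + \tfrac{\theta}{2} \ge \tfrac{\pi}{2}$, which already clears $\tfrac{\pi}{4}$ comfortably. The crux — and the step I expect to be the main obstacle — is the upper bound $\phi_\kappa\le\tfrac{3\pi}{4}$. Writing $\kappa^\ast = \tfrac{\pi}{2\theta}-\tfrac12$ for the ideal real-valued iterate at which $\phi_{\kappa^\ast}=\tfrac{\pi}{2}$ exactly, one has $\phi_\kappa - \tfrac{\pi}{2} = (\kappa-\kappa^\ast)\theta$, so everything hinges on controlling the overshoot of the ceiling past $\kappa^\ast$ and establishing $(\kappa-\kappa^\ast)\theta\le\tfrac{\pi}{4}$. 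This is exactly where the hypothesis $M\le N/2$ (equivalently $\theta\le\tfrac{\pi}{2}$) and the gap between $\arcsin\sqrt{M/N}$ and $\sqrt{M/N}$ must be leveraged to convert the slack inherent in the ceiling into membership in the window $[\tfrac{\pi}{4},\tfrac{3\pi}{4}]$. I would isolate this overshoot estimate as a self-contained lemma bounding $\phi_\kappa-\tfrac{\pi}{2}$ in terms of $\theta$, since it is the only nonroutine ingredient; the remaining pieces — the reflection-composition identity giving the per-step angle $\theta$ and the computation of the measurement weight — are standard.
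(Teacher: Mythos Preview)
The paper does not prove this theorem; it is quoted as a classical result and attributed to Nielsen--Chuang, so there is no in-paper argument to compare against. Your plan is exactly the standard two-dimensional rotation analysis from that reference, and it is the right approach.

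Two technical points deserve care. First, with the paper's sign convention $G = (I-2\ket{\psi}\bra{\psi})\,O$, the factor $R=I-2\ket{\psi}\bra{\psi}$ is the reflection through the hyperplane \emph{orthogonal} to $\ket{\psi}$, not through $\ket{\psi}$ itself; consequently $G$ acts on $\mathrm{span}\{\ket{\alpha},\ket{\beta}\}$ as a rotation by $\pi+\theta$ rather than $\theta$, and $G^\kappa\ket{\psi}$ carries a global phase $(-1)^\kappa$ relative to your formula. This is invisible to measurement, so $\sin^2\!\big(\tfrac{2\kappa+1}{2}\theta\big)$ remains the correct success probability. Second, and more substantively, your reduction ``$\sin^2\phi_\kappa\ge\tfrac12 \Leftrightarrow \phi_\kappa\in[\tfrac{\pi}{4},\tfrac{3\pi}{4}]$'' holds only modulo $\pi$. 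For $M$ near $N/2$ the angle $\theta$ is near $\tfrac{\pi}{2}$ and $\phi_\kappa$ overshoots past $\pi$: already at $M=N/2$ one gets $\theta=\tfrac{\pi}{2}$, $\kappa=2$, $\phi_\kappa=\tfrac{5\pi}{4}$, and $(\kappa-\kappa^\ast)\theta=\tfrac{3\pi}{4}\not\le\tfrac{\pi}{4}$. So the overshoot lemma you isolate cannot target $\phi_\kappa\le\tfrac{3\pi}{4}$ as stated; it must instead bound the distance from $\phi_\kappa$ to the nearest odd multiple of $\tfrac{\pi}{2}$ by $\tfrac{\pi}{4}$.
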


The next result allows one to compute the result $G^\kappa \ket{\psi}$ from Theorem~\ref{thm:grover} using a map over a single qubit.

\begin{theorem}[Reduced Grover]\label{thm:grover:lump}
The BCB $L^\dagger \in \CO^{N \times d}$ of $G$ w.r.t. $S_{\ket{\psi}}$ has dimension $d = 2$ and a column space spanned by $\ket{\psi}$ and $G \ket{\psi}$.
\end{theorem}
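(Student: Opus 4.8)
The plan is to show that the Krylov subspace generated by $\ket{\psi}$ under $G$ stabilizes after two vectors, i.e.\ that $G^2 \ket{\psi} \in \langle \ket{\psi}, G\ket{\psi} \rangle_c$ (and that $\ket{\psi}$ and $G\ket{\psi}$ are linearly independent, so the dimension is not $1$). By the characterization of BCB established in the proof of Theorem~\ref{thm:duality} --- namely that the column space of the minimal BCB $L^\dagger$ w.r.t.\ $S_{\ket{z}}$ equals $\langle U^k \ket{z} \mid 0 \le k \le N-1\rangle_c$ --- it suffices to prove that this span is closed already at $k=1$, from which $d=2$ and the claimed spanning set follow immediately.

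The key computation is to expand $G\ket{\psi}$ and $G^2\ket{\psi}$ using \eqref{equ:grover}. First I would split $\ket{\psi} = \ket{\alpha} + \ket{\beta}$, where $\ket{\beta} = \frac{1}{\sqrt{N}}\sum_{x : f(x)=1}\ket{x}$ collects the ``good'' basis states and $\ket{\alpha}$ the ``bad'' ones; this is the standard two-dimensional invariant subspace in the analysis of Grover's algorithm. On basis vectors, $G\ket{x} = (-1)^{f(x)}\ket{x} - \frac{2}{\sqrt N}(-1)^{f(x)}\sum_y \langle \psi \mid x\rangle \ket{y}$ --- more usefully, writing $O\ket{x} = (-1)^{f(x)}\ket{x}$ for the oracle, $G = (2\ket{\psi}\bra{\psi} - I)O$ (up to the overall sign convention in \eqref{equ:grover}, which I would fix carefully). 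Then $G$ maps $\mathrm{span}\{\ket{\alpha},\ket{\beta}\}$ into itself: $O$ fixes $\ket{\alpha}$ and negates $\ket{\beta}$, and the reflection $2\ket{\psi}\bra{\psi}-I$ preserves $\mathrm{span}\{\ket{\alpha},\ket{\beta}\}$ since $\ket{\psi}=\ket\alpha+\ket\beta$ lies in it. Hence every $G^k\ket{\psi}$ lies in this two-dimensional space, so $\langle G^k\ket{\psi}\mid k\ge 0\rangle_c$ has dimension at most $2$; the hypothesis $M \le N/2$ (indeed $M < N$, i.e.\ $f$ non-zero but not constantly $1$, which holds since $M \le N/2 < N$) guarantees $\ket{\alpha}\neq 0$ and $\ket{\beta}\neq 0$, and a short calculation shows $\ket{\psi}$ and $G\ket{\psi}$ are independent (the rotation angle is nonzero), so the dimension is exactly $2$. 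By the minimality statement in Theorem~\ref{thm:red:measure} together with the duality Theorem~\ref{thm:duality}, the minimal BCB has $d=2$ with column space $\langle \ket{\psi}, G\ket{\psi}\rangle_c$.

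The main obstacle is bookkeeping rather than conceptual: one must verify that the span of $\{\ket\psi, G\ket\psi\}$ genuinely coincides with $\{\ket\alpha,\ket\beta\}$ (this needs $\ket\alpha,\ket\beta$ both nonzero and $G\ket\psi$ not proportional to $\ket\psi$), and handle the degenerate possibility that $\ket{\psi}$ is itself an eigenvector of $G$ --- which would collapse $d$ to $1$. Since $G$ acts on $\mathrm{span}\{\ket\alpha,\ket\beta\}$ as a rotation by $2\theta$ with $\sin\theta = \sqrt{M/N} \in (0,1)$, the angle $2\theta$ is neither $0$ nor $\pi$ when $0 < M < N$, so $\ket\psi$ is not an eigenvector and $d=2$. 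I would state this rotation fact (it is exactly what underlies Theorem~\ref{thm:grover}) and read off both the dimension and the explicit spanning set from it.
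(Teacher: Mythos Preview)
Your proposal is correct and follows essentially the same route as the paper: both identify the Krylov subspace $\langle G^k\ket{\psi}\mid k\ge 0\rangle_c$ with the standard two-dimensional Grover subspace spanned by the solution and non-solution superpositions $\ket{\alpha},\ket{\beta}$. The paper simply cites \cite[Section~6.2]{nielsen00} for this fact, whereas you spell out the invariance argument and explicitly verify the lower bound $d\ge 2$ via the nonzero rotation angle; your labeling of $\ket{\alpha}$ and $\ket{\beta}$ is swapped relative to the paper's, but this is immaterial.
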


\begin{proof}%[Theorem~\ref{thm:grover:lump}]
The claim follows by noting that the column space of an BCB w.r.t. $S_{\ket{\psi}}$ is spanned by $\ket{\psi}, G \ket{\psi}, G^2 \ket{\psi}, \ldots, G^{N-1} \ket{\psi}$ and so on. This, in turn, is known to have as basis~\cite[Section 6.2]{nielsen00}
        \begin{align*}
        \ket{\alpha} & = \frac{1}{\sqrt{M}} \sum_{x : f(x) = 1} \ket{x} & \text{and} & &
        \ket{\beta} & = \frac{1}{\sqrt{N-M}} \sum_{x : f(x) = 0} \ket{x} ,
        \end{align*}
        where $M$ is as above, while $\ket{\alpha}$ is the superposition (i.e., sum) of all solution states and $\ket{\beta}$ is the superposition of all non-solution states.
\end{proof}

\begin{remark}
Although BCB $L^\dagger$ wrt $S_{\ket{\psi}}$ always has dimension $2$, its column space depends on the oracle function $f$. This is because $f$ appears in $G$, see~(\ref{equ:grover}).
\end{remark}

\subsection{Quantum Optimization}\label{sec:qaoa}

The Quantum Approximation Optimization Algorithm (QAOA)~\cite{farhi2014quantum} is a computational model that has the same expressive power as the common quantum circuit model~\cite{farhi2000quantum,farhi2014quantum,1366223}. It is described by two matrices. The first is the \emph{problem Hamiltonian} $H_P$ for which we are interested in computing a maximal eigenstate, i.e., an eigenvector for a maximal eigenvalue of $H_P$. The second is the  \emph{begin Hamiltonian} $H_B$ for which a maximal eigenstate $\ket{\psi}$ is known already. With this, a maximal eigenstate of $H_P$ can be obtained by performing the QAOA introduced next. 

\begin{definition}[QAOA~\cite{farhi2014quantum}] \label{def:qaoa}
For a problem Hamiltonian $H_P$ and a begin Hamiltonian $H_B$, fix the unitary matrices
\begin{align*}
U_B(\delta) & = \exp(-i \delta H_B) & \text{and} &  &
U_P(\delta) & = \exp(-i \delta H_P)
\end{align*}
where $\delta > 0$ is a sufficiently small time step and $\exp(A)$ is the matrix exponential. For a sequence of natural numbers $(k_i,l_i)_{i=1}^\kappa$ of length $\kappa \geq 1$, we define

\begin{equation}\label{eq:qaoa:expanded}
        \ket{w_\kappa} = U_B(\delta)^{k_\kappa} U_P(\delta)^{l_\kappa} \cdot \ldots \cdot U_B(\delta)^{k_1} U_P(\delta)^{l_1} \ket{\psi}
    \end{equation}
The QAOA with $\kappa \geq 1$ stages is then given by $\max \{ \bra{w_\kappa} H_P \ket{w_\kappa} \mid (k_i,l_i)_{i=1}^\kappa \}$. \end{definition}

While the problem Hamiltonian $H_P$ depends on the task or problem we are solving, the choice of the begin Hamiltonian $H_B$ is informed by the so-called adiabatic theorem, a result that identifies conditions under which QAOA returns a global optimum. A common heuristic is to pick $H_B$ such that $H_B$ and $H_P$ do not diagonalize over a common basis~\cite{farhi2000quantum,farhi2014quantum}  and to assume without loss of generality that $\ket{\psi} =  \sum_x \ket{x} / \sqrt{N}$ is the unique maximal eigenvector of $H_B$.

We next demonstrate that bisimulation can reduce QAOA when applied to SAT and MaxCut, two NP-complete problems~\cite{sipser1996introduction}. We start by introducing the problem Hamiltonians $H_P$ for both cases.

\begin{definition}[SAT and MaxCut Problem Hamiltonians]\label{def:ham:sat}
\begin{itemize}
    \item For a boolean formula $\phi = \bigwedge_{i = 1}^M C_i$, where $C_i$ is a clause over $n$ boolean variables, the problem Hamiltonian is given by $H_P = \sum_i H_i$, where
        \[
        H_i \ket{x} =
        \begin{cases}
        \ket{x} & , \ C_i(x) \text{ is true} \\
        0 & , \ C_i(x) \text{ is false} \\
        \end{cases}
        \]
        for any $x \in \{0,1\}^n$ that represents a Boolean assignment.
    \item For an undirected unweighted graph $G = (V,E)$ with vertices $V = \{1,\ldots,n\}$ and edges $E \subseteq V \times V$, we define the problem Hamiltonian $H_P = \sum_{(i,j) \in E} H_{i,j}$, where
        \[
        H_{i,j} \ket{x} =
        \begin{cases}
        \ket{x} & , \ x_i \neq x_j  \\
        0 & , \ x_i = x_j \\
        \end{cases}
        \]
        for any $x \in \{0,1\}^n$ that represents a cut $C \subseteq \{1,\ldots,n\}$ by setting $i \in C$ if and only if $x_{i-1} = 1$.
\end{itemize}
\end{definition}

Following this definition, it can be shown that the QAOA $\bra{w_\kappa} H_P \ket{w_\kappa}$ from Definition~\ref{def:qaoa} corresponds to a quantum measurement reporting either the expected number of satisfied clauses or the expected size of the cut. %Moreover, the exponential matrices $U_B(\beta)$ and $U_P(\gamma)$ realize, respectively, the so-called Hamiltonian simulations of duration $\beta$ and $\gamma$. %Owning to the fact that $H_P$ is in diagonal form, $U_P(\gamma)$ enjoys a closed form solution given by $U_P(\gamma) \ket{x} = e^{-i \gamma \nu} \ket{x}$ with $H_P \ket{x} = \nu \ket{x}$.
It is possible to guarantee that QAOA finds a global optimum for a sufficiently high $\kappa$~\cite{farhi2000quantum,farhi2014quantum}.

The next result ensures that $H_P$ has BCB $L^\dagger$ w.r.t. $S_{\ket{\psi}}$ whose reduced map is provably small. Moreover, for any such $L$, it ensures that there exists a begin Hamiltonian $H_B$ for which $L^\dagger$ is a BCB too, thus ensuring substantial reductions of the entire QAOA calculation~(\ref{eq:qaoa:expanded}).

\begin{theorem}[Reduced QAOA]\label{thm:ham:sat}
Fix $H_P$ as in Definition~\ref{def:ham:sat}, any $\delta > 0$ and let $L^\dagger \in \CO^{N \times d}$ be a BCB of $U_P(\delta)$ w.r.t. $S_{\ket{\psi}}$. Then
\begin{enumerate}
    \item The column space of $L^\dagger$ is spanned by
        \begin{align}\label{eq:ham:sat}
        \big(\ket{\psi}, U_P(\delta) \ket{\psi}, U^2_P(\delta) \ket{\psi}, \ldots, U^{M-1}_P(\delta)\ket{\psi} \big)^\dagger ,
        \end{align}

    where $M$ is the number of clauses (SAT) or edges (MaxCUT). Specifically, the dimension of the BCB $d$ is bounded by $M$.

    \item Then, for any Hamiltonian $\hat{H}_B \in \CO^{d \times d}$ (i.e., Hermitian matrix), there is a Hamiltonian $H_B \in \CO^{N \times N}$ such that
        \begin{itemize}
            \item $L^\dagger$ is a BCB of $U_B(\delta) = \exp(-i \delta H_B)$ wrt $S_{\ket{\psi}}$, while its reduced map is $\hU_B(\delta) = \exp \, (-i \delta \hat{H}_B)$

            \item The computation~(\ref{eq:qaoa:expanded}) satisfies
            \begin{align}
            \ket{w_\kappa} & = U_B(\delta)^{k_\kappa} U_P(\delta)^{l_\kappa} \cdot \ldots \cdot U_B(\delta)^{k_1} U_P(\delta)^{l_1} \ket{\psi} \nonumber \\
            & = L^\dagger \hU_B(\delta)^{k_\kappa} \hU_P(\delta)^{l_\kappa} \cdot \ldots \cdot \hU_B(\delta)^{k_1} \hU_P(\delta)^{l_1} L \ket{\psi} \label{eq:qaoa:red}
            \end{align}
            The QAOA in $\CO^N$ thus corresponds to a QAOA in the reduced space $\CO^d$.
            \end{itemize}
    \end{enumerate}
\end{theorem}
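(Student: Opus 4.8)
The plan is to establish part~1 by analyzing the spectral structure of $H_P$, and then use the duality result (Theorem~\ref{thm:duality}) together with a lifting construction for part~2.

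For part~1, I would first observe that for the SAT Hamiltonian, each $H_i$ is the orthogonal projection onto $\mathrm{span}\{\ket{x} : C_i(x) \text{ true}\}$, so $H_P = \sum_{i=1}^M H_i$ has integer eigenvalues in $\{0,1,\ldots,M\}$; the same holds for MaxCut with the $H_{i,j}$. Hence $H_P$ has at most $M+1$ distinct eigenvalues, say $\lambda_0 < \lambda_1 < \cdots < \lambda_r$ with $r \le M$. Write $\ket{\psi} = \sum_{j=0}^r \ket{\psi_j}$ with $\ket{\psi_j}$ the (possibly zero) projection of $\ket{\psi}$ onto the $\lambda_j$-eigenspace. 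Then $U_P(\delta)^k \ket{\psi} = \sum_j e^{-i\delta\lambda_j k}\ket{\psi_j}$, so every iterate lies in $\mathrm{span}\{\ket{\psi_j} : \ket{\psi_j}\neq 0\}$, a space of dimension at most the number of nonzero $\ket{\psi_j}$, which is at most $r+1 \le M+1$. A slightly sharper count using $\lambda_0 = 0$ is not needed, but since $\ket{\psi}$ has nonzero overlap with at most $M+1$ distinct eigenspaces and the claim is a bound of $M$, one notes that at least one clause is satisfiable or one edge is cut only in the nontrivial case; the clean statement is simply that the Krylov space $\mathrm{span}\{U_P(\delta)^k\ket{\psi} : k \ge 0\}$ has dimension equal to the number of distinct eigenvalues $\lambda_j$ with $\ket{\psi_j}\neq 0$, hence at most $M$ when one accounts for the fact that the all-ones assignment and its complement behave predictably (for MaxCut, $x$ and $\bar x$ give the same cut value, collapsing eigenspaces). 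By Theorem~\ref{thm:duality} and the characterization of the BCB column space as this Krylov space (as in the proof of Theorem~\ref{thm:duality} and Theorem~\ref{thm:grover:lump}), the column space of $L^\dagger$ is spanned by~(\ref{eq:ham:sat}) and $d \le M$.

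For part~2, let $L^\dagger \in \CO^{N\times d}$ have orthonormal columns spanning the Krylov space $\calK$ of $H_P$ at $\ket{\psi}$, which by part~1 is $H_P$-invariant. Given an arbitrary Hermitian $\hat H_B \in \CO^{d\times d}$, define $H_B := L^\dagger \hat H_B L + c\,(I - P_L)$ for a constant $c$, or more simply $H_B := L^\dagger \hat H_B L$ extended by zero on $\calK^\perp$; this is Hermitian since $\hat H_B$ is, and $\calK$ is $H_B$-invariant by construction. Then $U_B(\delta) = \exp(-i\delta H_B)$ leaves $\calK$ invariant, so $L^\dagger$ is a BCB of $U_B(\delta)$ w.r.t. $S_{\ket{\psi}} \subseteq \calK$, and because the exponential respects the block structure, its reduced map on $\CO^d$ is $\exp(-i\delta\hat H_B) = \hU_B(\delta)$; here one uses $L U_B(\delta) L^\dagger = L \exp(-i\delta H_B) L^\dagger = \exp(-i\delta\, L H_B L^\dagger) = \exp(-i\delta \hat H_B)$, valid precisely because $\calK$ is invariant (equivalently $L^\dagger L$ commutes with $H_B$). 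Finally, since $L^\dagger$ is simultaneously a BCB of $U_P(\delta)$ and of $U_B(\delta)$, repeated application of the identity $U_\star(\delta) L^\dagger = L^\dagger \hU_\star(\delta)$ (which holds on all of $\CO^d$ because $L L^\dagger = I_{d\times d}$ and invariance gives $U_\star(\delta) L^\dagger = L^\dagger L U_\star(\delta) L^\dagger$) telescopes through~(\ref{eq:qaoa:expanded}): starting from $\ket{\psi} = L^\dagger L\ket{\psi}$ (as $\ket{\psi}\in\calK$), each factor $U_B(\delta)^{k_i}$ or $U_P(\delta)^{l_i}$ can be pushed past $L^\dagger$ at the cost of replacing it by its hatted counterpart, yielding~(\ref{eq:qaoa:red}). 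The quantity $\bra{w_\kappa} H_P \ket{w_\kappa}$ then equals $\bra{\hat w_\kappa} \hU L H_P L^\dagger \hU^{\dagger}\cdots$—more precisely it equals $\bra{\hat w_\kappa} \hat H_P \ket{\hat w_\kappa}$ with $\hat H_P := L H_P L^\dagger$, so the whole QAOA objective is computed inside $\CO^d$.

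The main obstacle I anticipate is getting the bound in part~1 to be exactly $M$ rather than $M+1$: the naive eigenvalue count gives $M+1$ distinct values $\{0,1,\ldots,M\}$, and one must argue that $\ket{\psi}$ cannot have nonzero overlap with all $M+1$ eigenspaces simultaneously (or that two of them must coincide), which requires a small combinatorial argument specific to SAT (e.g., not all of $0,\ldots,M$ are achieved as clause-satisfaction counts, or the extreme values are achieved on few assignments) and to MaxCut (the symmetry $x \leftrightarrow \bar x$ forces the cut-value spectrum to have at most $M$ distinct values attained with the relevant multiplicities, or $\ket{\psi}$'s symmetry kills certain components). The second, more routine, delicate point is verifying that $\exp$ commutes with the compression $L(\cdot)L^\dagger$ under the invariance hypothesis — this is where Hermiticity of $\hat H_B$ and the projection identity $P_L H_B = H_B P_L$ are essential, and it should be stated carefully since it does not hold for a compression of an arbitrary matrix.
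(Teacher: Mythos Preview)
Your approach is essentially the same as the paper's. For part~1 the paper also diagonalizes $H_P$, writes $\ket{\psi}$ in the eigenbasis of $U_P(\delta)$, and invokes a Vandermonde argument to conclude that the Krylov sequence stabilizes after at most $m$ steps, where $m$ is the number of distinct eigenvalues; for part~2 the paper constructs the block-diagonal lift explicitly (via an auxiliary Lemma that builds $U_B = L^\dagger \hU_B L + Q^\dagger \tilde U_B Q$ with $Q$ completing $L$ to an orthonormal basis) and then appeals to the fact that every unitary is the exponential of a Hamiltonian. Your variant of building $H_B$ first and exponentiating is equivalent and arguably cleaner, since block-diagonality of $H_B$ immediately gives block-diagonality of $\exp(-i\delta H_B)$ and you avoid the surjectivity-of-$\exp$ step.

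On the obstacle you flag: the paper simply asserts $m \le M$ without justification, so your concern about $M$ versus $M{+}1$ is legitimate, and the combinatorial patches you sketch (the $x\leftrightarrow \bar x$ symmetry for MaxCut, extremal clause counts for SAT) do not actually close the gap --- the symmetry preserves cut values rather than collapsing them, and a single non-tautological clause already realizes both eigenvalues $0$ and $1$. In other words, you have correctly located a soft spot that the paper's own proof does not address; the honest bound from the argument is $d \le M{+}1$, and getting $d\le M$ would require an additional problem-specific observation that neither you nor the paper supplies.
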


\begin{proof}%[Theorem~\ref{thm:ham:sat}]
We begin by proving 1. For SAT, it can be seen that $H_P \ket{x} = \nu \ket{x}$, where $0 \leq \nu \leq M$ is the number of clauses that are satisfied by assignment $x$. A similar formula holds for MaxCut, the difference being that $\nu$ is the size of the cut $x$. %Showing SAT is thus equivalent to showing that $H_P$ has an eigenvector for eigenvalue $M$, i.e., there exists $\ket{z}$ satisfying $H_P \ket{z} = M \ket{z}$. Likewise, solving MaxCut corresponds to finding the maximal eigenvector of $H_P$, i.e., some $\ket{z}$ satisfying $H_P \ket{z} = M \ket{z}$ where $M \leq |E|$ is maximal.
It should be noted that $H_P$ is in diagonal form for both SAT and MaxCut. If $m$ denotes the number of distinct eigenvalues of $H_P$, then $m \leq M$, where $M$ is in the case of SAT or MaxCUT, respectively, the number of clauses or edges. The same can be said regarding its matrix exponential $U_P(\delta)$ which, being unitary, enjoys an eigendecomposition, allowing us to write $\ket{\psi} = \sum_{i = 1}^{m} c_i \ket{z_i}$, where $\ket{z_i}$ is an eigenvector for eigenvalue $\lambda_i$ of $U_P(\delta)$. This yields
\[
U^k \ket{\psi} = \sum_{i = 1}^{m} c_i \lambda_i^k \ket{z_i}
\]
for all $k \geq 0$. Without loss of generality, consider $d \leq m$ such that $c_k = 0$ for all $k > d$ and $c_k \neq 0$ otherwise. Writing the vectors $\{ U^k \ket{\psi} \mid  0 \leq k \leq m-1 \}$ in the basis $\ket{z_1}, \ldots, \ket{z_{d}}$ gives rise to a regular Vandermonde matrix~\cite{DMD2009} in $\CO^{d \times d}$. This shows that $\{ U^k \ket{\psi} \mid  d \leq k \leq M-1 \}$ are linear combinations of $\{ U^k \ket{\psi} \mid  0 \leq k \leq d-1 \}$, completing the proof of 1. Instead, 2. follows from the definition of BCB and Lemma~\ref{lem:ham:sat} from below.
\end{proof}

The following auxiliary result is needed in the proof of Theorem~\ref{thm:ham:sat}.

\begin{lemma}\label{lem:ham:sat}
Pick any $L \in \CO^{d \times N}$ and $Q \in \CO^{(N - d) \times N}$ so that the rows of $L$ and $Q$ comprise an orthonormal basis of $\CO^N$, and define
\begin{align*}
    U_B & = L^\dagger \hU_B L + Q^\dagger \tilde{U}_B Q, &
    \hU_B & = \exp(- i \delta \hat{H}_B), &
    \tilde{U}_B & = \exp(- i \delta \tilde{H}_B)
\end{align*}
for any Hamiltonian $\hat{H}_B \in \CO^{d \times d}$ and $\tilde{H}_B \in \CO^{(N-d) \times (N - d)}$. Then, $U_B$ is unitary, $L$ is an FCB of it wrt $S_{\ket{\psi}}$, and $\hU_B$ is its reduced map. In addition, there exists a Hamiltonian $H_B \in \CO^{N \times N}$ that satisfies $U_B = \exp(- i \delta H_B)$.
\end{lemma}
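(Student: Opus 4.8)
The plan is to package the hypotheses on $L$ and $Q$ into the single statement that $V := \bigl(\begin{smallmatrix} L \\ Q \end{smallmatrix}\bigr) \in \CO^{N \times N}$ is unitary. Indeed, ``the rows of $L$ and $Q$ form an orthonormal basis of $\CO^N$'' is exactly $V V^\dagger = I_N$, which unpacks into $L L^\dagger = I_d$, $Q Q^\dagger = I_{N-d}$, $L Q^\dagger = 0$ (and, by daggering, $Q L^\dagger = 0$); since $V$ is square this also gives $V^\dagger V = L^\dagger L + Q^\dagger Q = I_N$. Writing $A \oplus B$ for the block-diagonal matrix $\bigl(\begin{smallmatrix} A & 0 \\ 0 & B \end{smallmatrix}\bigr)$, the definition of $U_B$ in the statement reads $U_B = V^\dagger (\hU_B \oplus \tilde{U}_B) V$, and everything else is elementary manipulation around this identity.

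First I would dispose of unitarity: $\hU_B = \exp(-i\delta \hat{H}_B)$ and $\tilde{U}_B = \exp(-i\delta \tilde{H}_B)$ are unitary because $\exp$ of $-i\delta$ times a Hermitian matrix is unitary; hence $\hU_B \oplus \tilde{U}_B$ is unitary, and so is its conjugate $U_B = V^\dagger (\hU_B \oplus \tilde{U}_B) V$. Next, expanding $L U_B = L L^\dagger \hU_B L + L Q^\dagger \tilde{U}_B Q$ and using $L L^\dagger = I_d$ and $L Q^\dagger = 0$ gives $L U_B = \hU_B L$; multiplying on the right by $L^\dagger$ then gives $L U_B L^\dagger = \hU_B$, so the reduced map of $L$ is $\hU_B$. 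From $L U_B = \hU_B L$ an immediate induction yields $L U_B^k = \hU_B^k L$ for every $k \ge 0$, so for any input $\ket{w_0}$, with $\ket{\hat{w}_0} = L \ket{w_0}$, we get $L \ket{w_k} = L U_B^k \ket{w_0} = \hU_B^k L \ket{w_0} = \ket{\hat{w}_k}$ for all $k \ge 1$. Combined with $S_{\ket{\psi}} \subseteq \langle L^\dagger \rangle_c$ --- which holds in the setting of Theorem~\ref{thm:ham:sat}, where $L^\dagger$ is a BCB of $U_P(\delta)$ w.r.t. $S_{\ket{\psi}}$, and which I would record as a standing hypothesis of the lemma --- this is exactly the assertion that $L$ is an FCB of $U_B$ w.r.t. $S_{\ket{\psi}}$ with reduced map $\hU_B$. (Alternatively one may invoke the row-space characterization of FCB from the proof of Theorem~\ref{thm:duality}: $\langle L U_B \rangle_r = \langle \hU_B L \rangle_r \subseteq \langle L \rangle_r$, together with $\langle S_0^\dagger \rangle_r \subseteq \langle L \rangle_r$ for a basis $S_0$ of $S_{\ket{\psi}}$.)

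Finally, for the Hamiltonian representation I would set $H_B := V^\dagger (\hat{H}_B \oplus \tilde{H}_B) V = L^\dagger \hat{H}_B L + Q^\dagger \tilde{H}_B Q$. Hermiticity of $\hat{H}_B$ and $\tilde{H}_B$ makes $\hat{H}_B \oplus \tilde{H}_B$ Hermitian, and conjugation by the unitary $V$ preserves Hermiticity, so $H_B^\dagger = H_B$. Then, using $\exp(V^\dagger X V) = V^\dagger \exp(X) V$ for invertible $V$ (immediate from the power series, since $(V^\dagger X V)^k = V^\dagger X^k V$) together with the fact that $\exp$ of a block-diagonal matrix is the block-diagonal matrix of the exponentials of the blocks, one obtains $\exp(-i\delta H_B) = V^\dagger \exp\big(-i\delta(\hat{H}_B \oplus \tilde{H}_B)\big) V = V^\dagger (\hU_B \oplus \tilde{U}_B) V = U_B$, completing the proof. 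The argument is wholly routine; the only care needed is the bookkeeping of the four orthogonality relations bundled into ``$V$ unitary'' and carrying along the constraint $S_{\ket{\psi}} \subseteq \langle L^\dagger \rangle_c$ --- without which the FCB clause fails for general $L$ --- so I do not anticipate a substantive obstacle.
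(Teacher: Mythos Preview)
Your proof is correct and follows essentially the same route as the paper: both reduce to the identities $LL^\dagger = I_d$, $LQ^\dagger = 0$ (your block-unitary packaging $V = \bigl(\begin{smallmatrix} L \\ Q \end{smallmatrix}\bigr)$ is just a clean way to record these) and then compute $L U_B = \hU_B L$, from which the FCB property and the reduced map follow. The one small difference is that the paper merely invokes the general fact that every unitary is the exponential of a Hermitian matrix to obtain $H_B$, whereas you give the explicit construction $H_B = L^\dagger \hat{H}_B L + Q^\dagger \tilde{H}_B Q$; this is a nice sharpening but not a different argument.
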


\begin{proof}
We first show that $L U_B = L U_B L^\dagger L$ as this implies that $L$ is an FCB of $U$ by~\cite{tomlin1997effect}. To see this, we note that
\begin{align*}
    L U_B L^\dagger L & = L \big( L^\dagger \hU_B L + Q^\dagger \tilde{U}_B Q \big) L^\dagger L
    = L L^\dagger \hU_B L L^\dagger L +  L Q^\dagger \tilde{U}_B Q L^\dagger L
    = \hU_B L \\
    L U_B & = L \big( L^\dagger \hU_B L + Q^\dagger \tilde{U}_B Q \big)
    = L L^\dagger \hU_B L + L Q^\dagger \tilde{U}_B Q
    = \hU_B L
\end{align*}
where we have used that $L L^\dagger = 0$ and $L Q^\dagger = 0$, which follows from the choice of $Q$. From the calculation, we can also infer that $\hU_B = L U_B L^\dagger$, i.e., $\hU_B$ is indeed the reduced map. In a similar fashion, one can note that $Q$ is also an FCB of $U_B$ and that $\tilde{U}_B$ is the respective reduced map. Since both $\hU_B$ and $\tilde{U}_B$ are unitary, we infer that also $U_B$ is unitary (alternatively, a direct calculation yields $I = U_B U_B^\dagger$). Since any unitary matrix can be written as a matrix exponential of a Hamiltonian, there exists a Hamiltonian $H_B$ satisfying $U_B = \exp(-i\delta H_B)$.
\end{proof}

\subsection{Quantum Factorization and Order Finding}\label{sec:qorder}

Let us assume that we are given a composite number $N$ which we seek to factorize. As argued in~\cite[Section 5.3.2]{nielsen00}, this problem can be solved in randomized polynomial time, provided that the same holds for the order finding problem. Given some randomly chosen $x \in \{2,3\ldots,N-1\}$, the latter asks to compute the multiplicative order of $x$ modulo $N$, i.e., the smallest $r \geq 1$ satisfying $x^r \mod N = 1$. Following~\cite[Section 5.3.1]{nielsen00}, we consider the quantum algorithm defined by the unitary map
\[
U \ket{y} =
\begin{cases}
\ket{ xy \mod N } & , \ 0 \leq y < N \\
\ket{y} & , \ N \leq y < 2^l
\end{cases}
\]
Here, $l \geq 1$ is the smallest number satisfying $N \leq 2^l$.

The next result allows us to relate the order of $x$ to the dimension of the BCB wrt $S_{\ket{1}}$. This fact is exploited in Shor's factorization algorithm~\cite{nielsen00}.

\begin{theorem}[Reduced Order Finding]\label{thm:u:factor:lump}
The dimension of the BCB of $U$ wrt $S_{\ket{1}}$ is equal to the order of $x$ modulo $N$.
\end{theorem}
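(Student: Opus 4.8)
The plan is to identify the column space of the minimal BCB with the Krylov subspace generated by $\ket{1}$ under $U$, and then count its dimension directly using the arithmetic of powers of $x$ modulo $N$. First I would record the elementary observation that $U^k \ket{1} = \ket{x^k \bmod N}$ for every $k \ge 0$: this follows by induction on $k$, since $0 \le x^k \bmod N < N$ always, so the first case of the definition of $U$ applies at each step and $U\bigl(\ket{x^k \bmod N}\bigr) = \ket{x \cdot x^k \bmod N} = \ket{x^{k+1} \bmod N}$.

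Next I would invoke Theorem~\ref{thm:duality}: by the computation in its proof, the minimal BCB $L^\dagger$ of $U$ with respect to $S_{\ket{1}}$ has column space
\[
\langle L^\dagger \rangle_c = \langle U^k \ket{1} \mid 0 \le k \le N-1 \rangle_c = \langle \ket{x^k \bmod N} \mid 0 \le k \le N-1 \rangle_c,
\]
so it suffices to show this subspace has dimension exactly $r$, the multiplicative order of $x$ modulo $N$. Write $r$ for that order; note that order finding presupposes $\gcd(x,N) = 1$, so $x$ is invertible modulo $N$ and the order is well defined. I would then argue two things. (i) The residues $x^0, x^1, \ldots, x^{r-1}$ are pairwise distinct modulo $N$: if $x^i \equiv x^j \pmod N$ with $0 \le i < j \le r-1$, then multiplying by $x^{-i}$ gives $x^{j-i} \equiv 1 \pmod N$ with $0 < j-i < r$, contradicting minimality of $r$. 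Hence $\ket{1}, \ket{x \bmod N}, \ldots, \ket{x^{r-1} \bmod N}$ are $r$ pairwise distinct canonical basis vectors of $\CO^N$, and so are linearly independent. (ii) The Krylov sequence is periodic with period $r$, because $U^r \ket{1} = \ket{x^r \bmod N} = \ket{1}$ and therefore $U^{k+r}\ket{1} = U^k \ket{1}$ for all $k \ge 0$; consequently $\langle \ket{x^k \bmod N} \mid k \ge 0 \rangle_c$ is already spanned by the first $r$ of these vectors. Combining (i) and (ii), the Krylov subspace has dimension exactly $r$, and since $r \le N$ the truncation at $k \le N-1$ in the formula above loses nothing. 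Therefore $d = \dim \langle L^\dagger \rangle_c = r$.

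There is no serious obstacle here; the only points requiring care are (a) correctly citing the characterization of the BCB column space as the Krylov subspace, which comes from the proof of Theorem~\ref{thm:duality} rather than its statement, and (b) making explicit the harmless assumption $\gcd(x,N) = 1$ that is built into the order-finding setup, without which the statement would be vacuous. Everything else is the standard fact that the orbit of $\ket{1}$ consists of $r$ distinct basis vectors arranged in a cycle.
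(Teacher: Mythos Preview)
Your proof is correct and takes a genuinely different, more elementary route than the paper. The paper proceeds via the eigendecomposition of $U$ on the cyclic subspace: it introduces the vectors $\ket{u_s} = \frac{1}{\sqrt{r}}\sum_{k=0}^{r-1} e^{-2\pi i s k / r}\ket{x^k \bmod N}$, observes that $U\ket{u_s} = e^{2\pi i s/r}\ket{u_s}$ and $\ket{1} = \frac{1}{\sqrt{r}}\sum_s \ket{u_s}$, and then argues that $\{U^k\ket{1}\}_{k=0}^{r-1}$ written in the basis $\ket{u_0},\ldots,\ket{u_{r-1}}$ form a regular Vandermonde matrix, forcing the dimension to be exactly $r$. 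You bypass all of this by noting directly that $U^k\ket{1} = \ket{x^k \bmod N}$ is a canonical basis vector, that the first $r$ of these are pairwise distinct (hence independent) by minimality of the order, and that the orbit is $r$-periodic. Your argument is shorter and uses nothing beyond the definition of multiplicative order; the paper's argument, while heavier for this statement, has the side benefit of exhibiting the eigenstructure that underlies the quantum phase estimation step in Shor's algorithm.
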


\begin{proof}%[Theorem~\ref{thm:u:factor:lump}]
If can be shown~\cite{nielsen00} that the $U$ from above is unitary and that $U \ket{u_s} = e^{2\pi i s / r} \ket{u_s}$ for all $0 \leq s \leq r - 1$, where
\begin{align*}
\ket{u_s} & = \frac{1}{\sqrt{r}} \sum_{k = 0}^{r - 1} \exp\big[ \frac{-2\pi i s k}{r} \big] \ket{x^k \mod N} &
& \text{and} &
\frac{1}{\sqrt{r}}\sum_{s = 0}^{r - 1} \ket{u_s} & = \ket{1} .
\end{align*}
With this, $U^k \ket{1} = \frac{1}{\sqrt{r}} \sum_{s = 0}^{r-1} (e^{2\pi i s / r})^k u_s$ for any $p \geq 0$. Hence, the minimal BCB with respect to $S_{\ket{1}}$ is contained in the span of $u_0,\ldots,u_{r-1}$. To see that the dimension is exactly $r$, we note that the vectors $\{ U^k \ket{\psi} \mid  0 \leq k \leq r - 1 \}$, written in the basis $u_0, \ldots, u_{r-1}$, constitute a regular Vandermonde matrix~\cite{DMD2009} in $\CO^{r \times r}$.
\end{proof}

\section{Numerical Experiments}\label{sec:benchmark}

We evaluated our approach on the applications from Section~\ref{sec:applications} and the MQT Bench quantum circuit benchmark suite~\cite{quetschlich2022mqtbench}. To this end, we extended the publicly available implementation of the CLUE algorithm from~\cite{ovchinnikov2021,leguizamon2023} which used Python, Gaussian elimination and sparse vectors and matrices. The new version was re-implemented in C++ and uses the state-of-the-art decision diagram package for quantum computing provided as part of the Munich Quantum Toolkit's MQT Core library (available at \url{https://github.com/cda-tum/mqt-core}). A version that uses sparse vectors and matrices for representing state vectors and quantum circuits, respectively, is offered for completeness. The respective prototype is publicly accessible at the GitHub repository \texttt{CLUE}, branch \texttt{quantum+cpp}\footnote{\url{https://github.com/Antonio-JP/CLUE/tree/quantum\%2Bcpp}}, all results reported were executed on a machine with an i7-8665U CPU, 32GB RAM and a 1024GB SSD. 

As anticipated, the evaluation demonstrates that a combination of quantum bisimulation with decision diagrams outperforms each individual approach.

%. \Antonio{TODO: Change the Zenodo url when created} 
%by changing in CLUE the domain from real numbers to complex numbers and, instead of using Gaussian elimination~\cite{ovchinnikov2021,jimenez2022} for deciding membership properties, we used orthogonal projections.  the implementation has been conducted in C++ rather than Python 

%Algorithm~\ref{alg} has been implemented in C++ and uses decision diagrams (both for state vectors and quantum circuits) via the repository \texttt{mqt.core} (see~\cite{quetschlich2022mqtbench}). 

\subsection{Applications from Section~\ref{sec:applications}}\label{subsec:exp_clue}

Here we report the results of numerical experiments on the applications discussed in Section~\ref{sec:applications}. Starting with five qubits, we increased the number until a timeout of $500$ seconds has been reached. 
%considered circuits ranging from $5$ to $28$ qubits and fixed a timeout of $500$ seconds. 
To allow for a representative evaluation, we averaged runtimes over $50$ independent runs for each case study. Specifically, the instances were generated as follows:
\begin{itemize}
    \item \emph{Grover algorithm} (Sec.~\ref{sec:qsearch}): for a fixed numbers $n$ of qubits, we pick randomly an element $m \in\{0,\ldots,2^n-1\}$. We create the corresponding circuit for Grover's algorithms with $n$ qubits (1 of them is ancillary) that looks for the element $m$.
    \item \emph{Quantum Optimization for SAT} (Sec.~\ref{sec:qaoa}): for each number of qubits $n$, we generate a random formula with $m$ clauses ($m$ is randomly selected between $n$ and $3n$), where each clause has 3 variables at most. We guarantee that every formula contains all $n$ variables.
    \item \emph{Quantum Optimization for MaxCut} (Sec.~\ref{sec:qaoa}): for each number of qubits $n$, we generate an Erd\H{o}s-R\'enyi graph with $n$ nodes and edge probability $\tfrac{1}{3}$.
\end{itemize}

%For Grover's algorithm the experiments confirmed the two-dimensional bisimulation reduction theoretically demonstrated in Theorem~\ref{thm:grover:lump}. Instead, for quantum optimization we measured the (average) achieved reduction against the theoretical bounds developed in Theorem~\ref{thm:ham:sat}. 

We considered three different regimes: 
\begin{itemize}
    \item CLUE:  quantum bisimulation is computed using CLUE that relies on vector matrix representations; the simulation is conducted using the reduced system; 
    \item DDSIM+CLUE: quantum bisimulation is computed using CLUE that relies on decision diagrams; the simulation is conducted using the reduced system; 
    \item DDSIM: using decision diagrams, we simulate the full circuit. 
\end{itemize}
 The number of steps $\kappa$ was set to the smallest integer greater than or equal to $\sqrt{N}$. The choice of $\kappa$ is motivated by Theorem~\ref{thm:grover} and the discussion around the so-called adiabatic theorem in~\cite{farhi2000quantum,farhi2014quantum}.

% Old Table extended by DDSIM column

\begin{table*}[tp]
\begin{adjustbox}{center}
    \scriptsize
    \begin{tabular}{c|rrr|rrr|rrr}
\toprule
 & \multicolumn{3}{c}{\textbf{Grover}} & \multicolumn{3}{c}{\textbf{SAT}} & \multicolumn{3}{c}{\textbf{MAXCUT}} \\
 & \textbf{CLUE} & \textbf{DDSIM+CLUE} & \textbf{DDSIM} & \textbf{CLUE} & \textbf{DDSIM+CLUE} & \textbf{DDSIM} & \textbf{CLUE} & \textbf{DDSIM+CLUE} & \textbf{DDSIM} \\
\midrule
5 & 0.0028 & 0.0020 & \bfseries 0.0019 & \bfseries 0.0035 & 0.0111 & 0.0134 & \bfseries 0.0022 & 0.0038 & 0.0070 \\
6 & 0.0112 & \bfseries 0.0023 & 0.0025 & \bfseries 0.0080 & 0.0160 & 0.0758 & \bfseries 0.0069 & 0.0086 & 0.0339 \\
7 & 0.0463 & \bfseries 0.0026 & 0.0032 & \bfseries 0.0186 & 0.0326 & 0.4515 & 0.0225 & \bfseries 0.0179 & 0.2372 \\
8 & 0.1752 & \bfseries 0.0032 & 0.0051 & \bfseries 0.0460 & 0.0746 & 1.8876 & 0.0593 & \bfseries 0.0421 & 1.9653 \\
9 & 0.6328 & \bfseries 0.0037 & 0.0072 & \bfseries 0.1150 & 0.1493 & 13.5664 & 0.1606 & \bfseries 0.1173 & 17.1287 \\
10 & 1.8453 & \bfseries 0.0049 & 0.0115 & \bfseries 0.1789 & 0.2560 & 46.2972 & 0.3117 & \bfseries 0.1675 & 52.3542 \\
11 & 7.1148 & \bfseries 0.0069 & 0.0174 & 0.4323 & \bfseries 0.3356 & 208.8560 & 0.8154 & \bfseries 0.3567 & 313.6538 \\
12 & 29.8714 & \bfseries 0.0114 & 0.0274 & 1.0144 & \bfseries 0.6830 & 271.8651 & 2.4235 & \bfseries 0.7538 & 418.2120 \\
13 & 128.0926 & \bfseries 0.0196 & 0.0431 & 2.6211 & \bfseries 1.3712 & >500 & 7.3718 & \bfseries 2.3766 & >500 \\
14 & >500 & \bfseries 0.0349 & 0.0707 & 5.6021 & \bfseries 2.4126 & >500 & 17.3324 & \bfseries 4.2761 & >500 \\
15 & >500 & \bfseries 0.0670 & 0.1077 & 13.2476 & \bfseries 4.9066 & >500 & 43.4601 & \bfseries 9.9412 & >500 \\
16 & >500 & \bfseries 0.1231 & 0.1801 & 30.1186 & \bfseries 10.0833 & >500 & 111.2968 & \bfseries 48.0762 & >500 \\
17 & >500 & \bfseries 0.2470 & 0.2976 & 64.4899 & \bfseries 26.5241 & >500 & 279.0988 & \bfseries 135.6231 & >500 \\
18 & >500 & \bfseries 0.4723 & 0.4993 & 157.1778 & \bfseries 89.1793 & >500 & >500 & \bfseries 334.3610 & >500 \\
19 & >500 & 0.9447 & \bfseries 0.9167 & 298.1822 & \bfseries 113.3666 & >500 & >500 & >500 & >500 \\
20 & >500 & 1.8267 & \bfseries 1.7992 & >500 & \bfseries 125.5865 & >500 & >500 & >500 & >500 \\
21 & >500 & 3.5860 & \bfseries 3.4293 & >500 & >500 & >500 & >500 & >500 & >500  \\
22 & >500 & \bfseries 7.2422 & 7.7943 & >500 & >500 & >500 & >500 & >500 & >500  \\
23 & >500 & \bfseries 14.4505 & 17.1884 & >500 & >500 & >500 & >500 & >500 & >500 \\
24 & >500 & \bfseries 28.4858 & 45.6023 & >500 & >500 & >500 & >500 & >500 & >500 \\
25 & >500 & \bfseries 58.6463 & 142.5514 & >500 & >500 & >500 & >500 & >500 & >500 \\
26 & >500 & \bfseries 112.1946 & 331.1369 & >500 & >500 & >500 & >500 & >500 & >500 \\
27 & >500 & \bfseries 225.3220 & >500 & >500 & >500 & >500 & >500 & >500 & >500 \\
28 & >500 & \bfseries 455.1170 & >500 & >500 & >500 & >500 & >500 & >500 & >500 \\
\bottomrule
\end{tabular}
\end{adjustbox}
\caption{Simulation times of quantum applications from Section~\ref{sec:applications} in case of three different regimes. CLUE: the circuit is reduced using vectors and simulated; DDSIM+CLUE: the circuit is reduced using decision diagrams and then simulated; DDSIM: the original circuit is simulated using decision diagrams without any reduction.
%bisimulation is computed using vectors and matrices (CLUE); decision  bisimulation using Decision Diagrams (DDSIM+CLUE) and iteration computation using Decision Diagrams (DDSIM). The latter includes the runtimes for computing the bisimulations by Algorithm~\ref{alg}.  For SAT and MaxCut, column $d$ reports the average size of the reduced circuit.
}\label{tab:examples}
\end{table*}

For quantum optimization, Table~\ref{tab:examples} reports logarithmic CLUE reductions, reducing in particular $2^{15} \times 2^{15}$ matrices to $15 \times 15$ matrices in less than $15$\,s. Moreover, we observe that a combination of CLUE with DDSIM outperforms each individual approach.
%computations with decision diagrams always benefit of first reducing the circuit and then computing the $\kappa$ iterations. 
%On the other hand, we can observe than in sparse circuits CLUE outperform computations with decision diagrams while dense circuits (as Grover algorithms) DDSIM outperform computations with matrices. The results for quantum optimization and Grover confirm the observation made in Remark~\ref{rem:single} that CLUE reductions may be practically useful in multi-step applications.

\subsection{Benchmark Circuits}\label{sec:eval}

Similarly to Table~\ref{tab:examples}, we next compare the scalability of vector-matrix operations versus that of decision diagrams, this time on quantum benchmarks from~\cite{quetschlich2022mqtbench}, available at \url{https://www.cda.cit.tum.de/mqtbench/}. To this end, we computed the constrained bisimulation w.r.t. $\ket{0}$ for all circuits in each benchmark family until a timeout of 500 seconds has been reached. We show the reduction ratio (RR) for the constrained bisimulation w.r.t. $\ket{0}$ and the times (in seconds) spent under regimes CLUE and DDSIM+CLUE, averaged over 5 executions. We note that some circuits were only available for a specific number of qubits (e.g., HHL and price calls). For the benefit of presentation, we omit qubit numbers whose simulation times were below one second for both CLUE and DDSIM+CLUE.

In general, it can be observed that substantial reductions could be obtained for a number of benchmark families. Concerning scalability, we note that DDSIM+CLUE outperforms CLUE in most cases. An exception to this rule is when a circuit cannot be reduced. In such a case, an extension of CLUE by decision diagrams tends to perform worse.

%These evaluation studies the scalability of CLUE for matrix-vector representations and  vs the scalability of computing CLUE using decision diagrams. 

% \newcolumntype{H}{>{\setbox0=\hbox\bgroup}c<{\egroup}@{}}
\begin{table}[tp]
\scriptsize
\caption{Reduction Ratio (RR) and Time (s) comparison between CLUE and DDSIM+CLUE for computing \
                        a lumping w.r.t. $\ket{0}$.}
\label{table:benchmark}
\begin{adjustbox}{center}
\begin{tabular}[t]{rc|rr|rrH|}
\toprule
 &  & \multicolumn{2}{c}{\textbf{CLUE}} & \multicolumn{2}{c}{\textbf{DDSIM+CLUE}} & \textbf{DDSIM} \\
 &  & \textbf{RR for $S_{\ket{0}}$} & \textbf{Time (s)} & \textbf{RR for $S_{\ket{0}}$} & \textbf{Time (s)} & \textbf{Time (s)} \\
\midrule
\multirow[c]{3}{*}{ae} & 7 & 100.00\% & 4.0022 & 100.00\% & \bfseries 3.4999 & 0.0030 \\
 & 8 & 100.00\% & \bfseries 31.7723 & 100.00\% & 32.5920 & 0.0061 \\
 & 9 & 100.00\% & \bfseries 264.9636 & - & >500 & 0.0166 \\
 % & 10 & - & >500 & - & >500 & 0.0284 \\
\cline{1-7}
\multirow[c]{6}{*}{dj} & 15 & 0.01\% & 0.1223 & 0.01\% & \bfseries 0.0252 & 0.0015 \\
 & 16 & - & >500 & <0.01\% & \bfseries 0.0499 & 0.0017 \\
 & \vdots & \vdots & \vdots & \vdots & \vdots & \vdots\\ 
 % & 17 & - & >500 & <0.01\% & \bfseries 0.1113 & 0.0018 \\
 % & 18 & - & >500 & <0.01\% & \bfseries 0.2037 & 0.0019 \\
 % & 19 & - & >500 & <0.01\% & \bfseries 0.3896 & 0.0020 \\
 % & 20 & - & >500 & <0.01\% & \bfseries 0.7208 & 0.0025 \\
 % & 21 & - & >500 & <0.01\% & \bfseries 1.3147 & 0.0022 \\
 % & 22 & - & >500 & <0.01\% & \bfseries 2.5275 & 0.0025 \\
 % & 23 & - & >500 & <0.01\% & \bfseries 5.0146 & 0.0026 \\
 % & 24 & - & >500 & <0.01\% & \bfseries 10.2448 & 0.0032 \\
 % & 25 & - & >500 & <0.01\% & \bfseries 19.7364 & 0.0033 \\
 % & 26 & - & >500 & <0.01\% & \bfseries 29.5373 & 0.0026 \\
 & 27 & - & >500 & <0.01\% & \bfseries 65.7853 & 0.0029 \\
 & 28 & - & >500 & <0.01\% & \bfseries 125.6654 & 0.0036 \\
 & 29 & - & >500 & <0.01\% & \bfseries 249.1532 & 0.0033 \\
 % & 30 & - & >500 & - & >500 & 0.0037 \\
\cline{1-7}
\multirow[c]{5}{*}{ghz} & 11 & 1.56\% & 2.5651 & 1.56\% & \bfseries 2.3672 & 0.0003 \\
 & 12 & 0.78\% & 5.0972 & 0.78\% & \bfseries 3.5408 & 0.0003 \\
 & 13 & 0.39\% & \bfseries 10.2004 & 0.39\% & 11.8914 & 0.0003 \\
 & 14 & 0.20\% & \bfseries 14.9587 & 0.20\% & 40.8600 & 0.0004 \\
 & 15 & 0.10\% & \bfseries 30.5938 & - & >500 & 0.0003 \\
 % & 16 & - & >500 & - & >500 & 0.0004 \\
\cline{1-7}
\multirow[c]{7}{*}{graphstate} & 8 & 9.38\% & 1.6081 & 9.38\% & \bfseries 0.1426 & 0.0002 \\
 & 9 & 3.52\% & 6.2612 & 3.52\% & \bfseries 0.2846 & 0.0004 \\
 & 10 & 0.98\% & 13.0928 & 0.98\% & \bfseries 0.1867 & 0.0003 \\
 & 11 & 2.34\% & 202.4504 & 2.34\% & \bfseries 2.5678 & 0.0004 \\
 & 12 & 0.29\% & 210.7506 & 0.29\% & \bfseries 0.9886 & 0.0004 \\
 & 13 & - & >500 & 0.15\% & \bfseries 0.8902 & 0.0004 \\
 & 14 & - & >500 & 0.13\% & \bfseries 33.3943 & 0.0005 \\
 % & 15 & - & >500 & - & >500 & 0.0006 \\
\cline{1-7}
\multirow[c]{1}{*}{hhl} & 8 & 87.50\% & \bfseries 12.4598 & 39.30\% & 18.8924 & 0.0674 \\
 % & 11 & - & >500 & - & >500 & 0.8989 \\
\cline{1-7}
\multirow[c]{4}{*}{portfolioqaoa} & 6 & 100.00\% & \bfseries 0.5800 & 100.00\% & 1.1288 & 0.0049 \\
 & 7 & 100.00\% & \bfseries 4.1406 & 100.00\% & 5.8476 & 0.0096 \\
 & 8 & 100.00\% & \bfseries 31.2608 & 100.00\% & 54.4699 & 0.0181 \\
 & 9 & 100.00\% & \bfseries 263.1970 & - & >500 & 0.0549 \\
 % & 10 & - & >500 & - & >500 & 0.3401 \\
\cline{1-7}
\multirow[c]{3}{*}{portfoliovqe} & 7 & 100.00\% & \bfseries 3.9064 & 100.00\% & 4.0044 & 0.0052 \\
 & 8 & 100.00\% & 31.6000 & 100.00\% & \bfseries 30.6770 & 0.0091 \\
 & 9 & 100.00\% & \bfseries 266.4854 & 100.00\% & 490.4663 & 0.0211 \\
 % & 10 & - & >500 & - & >500 & 0.0400 \\
\cline{1-7}
\multirow[c]{3}{*}{pricingcall} & 11 & 3.17\% & 8.6745 & 3.17\% & \bfseries 1.9269 & 0.0068 \\
 & 13 & 1.57\% & 171.3546 & 1.57\% & \bfseries 16.7369 & 0.0162 \\
 & 15 & - & >500 & 0.79\% & \bfseries 142.1170 & 0.0401 \\
 % & 17 & - & >500 & - & >500 & 0.1095 \\
\cline{1-7}
\multirow[c]{3}{*}{pricingput} & 11 & 3.17\% & 7.3885 & 3.17\% & \bfseries 1.4735 & 0.0057 \\
 & 13 & 1.57\% & 129.0688 & 1.57\% & \bfseries 10.8187 & 0.0156 \\
 & 15 & - & >500 & 0.79\% & \bfseries 130.6928 & 0.0429 \\
 % & 17 & - & >500 & - & >500 & 0.1177 \\
\cline{1-7}
\end{tabular}%
\begin{tabular}[t]{|rc|rr|rrH}
\toprule
 &  & \multicolumn{2}{c}{\textbf{CLUE}} & \multicolumn{2}{c}{\textbf{DDSIM+CLUE}} & \textbf{DDSIM} \\
 &  & \textbf{RR for $S_{\ket{0}}$} & \textbf{Time (s)} & \textbf{RR for $S_{\ket{0}}$} & \textbf{Time (s)} & \textbf{Time (s)} \\
\midrule
\multirow[c]{8}{*}{qft} & 10 & 0.20\% & 3.0963 & 0.20\% & \bfseries 0.0041 & 0.0009 \\
 & 11 & 0.10\% & 11.0858 & 0.10\% & \bfseries 0.0061 & 0.0011 \\
 & 12 & 0.05\% & 45.4641 & 0.05\% & \bfseries 0.0086 & 0.0013 \\
 & 13 & - & >500 & 0.02\% & \bfseries 0.0136 & 0.0015 \\
 & \vdots & \vdots & \vdots & \vdots & \vdots & \vdots\\ 
 % & 14 & - & >500 & 0.01\% & \bfseries 0.0243 & 0.0017 \\
 % & 15 & - & >500 & 0.01\% & \bfseries 0.0429 & 0.0021 \\
 % & 16 & - & >500 & <0.01\% & \bfseries 0.0841 & 0.0021 \\
 % & 17 & - & >500 & <0.01\% & \bfseries 0.1542 & 0.0029 \\
 % & 18 & - & >500 & <0.01\% & \bfseries 0.2889 & 0.0030 \\
 % & 19 & - & >500 & <0.01\% & \bfseries 0.5625 & 0.0032 \\
 % & 20 & - & >500 & <0.01\% & \bfseries 1.0694 & 0.0031 \\
 % & 21 & - & >500 & <0.01\% & \bfseries 2.1733 & 0.0036 \\
 % & 22 & - & >500 & <0.01\% & \bfseries 4.2376 & 0.0042 \\
 % & 23 & - & >500 & <0.01\% & \bfseries 8.4779 & 0.0043 \\
 % & 24 & - & >500 & <0.01\% & \bfseries 16.9401 & 0.0048 \\
 & 25 & - & >500 & <0.01\% & \bfseries 33.9910 & 0.0055 \\
 & 26 & - & >500 & <0.01\% & \bfseries 69.0587 & 0.0061 \\
 & 27 & - & >500 & <0.01\% & \bfseries 171.3884 & 0.0067 \\
\cline{1-7}
\multirow[c]{4}{*}{qnn} & 6 & 100.00\% & \bfseries 0.5825 & 100.00\% & 1.6813 & 0.0077 \\
 & 7 & 100.00\% & \bfseries 4.0485 & 100.00\% & 6.6028 & 0.0098 \\
 & 8 & 100.00\% & \bfseries 31.4335 & 100.00\% & 52.0715 & 0.0194 \\
 & 9 & 100.00\% & \bfseries 263.1350 & - & >500 & 0.0317 \\
 % & 10 & - & >500 & - & >500 & 0.0576 \\
\cline{1-7}
\multirow[c]{4}{*}{qpeexact} & 7 & 100.00\% & 1.3285 & 50.00\% & \bfseries 0.3797 & 0.0004 \\
 & 8 & 100.00\% & 10.8579 & 50.00\% & \bfseries 2.3920 & 0.0005 \\
 & 9 & 100.00\% & 88.0831 & 50.00\% & \bfseries 28.7468 & 0.0006 \\
 & 10 & - & >500 & 50.00\% & \bfseries 472.6555 & 0.0010 \\
 % & 11 & - & >500 & - & >500 & 0.0009 \\
\cline{1-7}
\multirow[c]{3}{*}{qpeinexact} & 7 & 100.00\% & 1.7021 & 50.00\% & \bfseries 0.5057 & 0.0013 \\
 & 8 & 100.00\% & 13.1596 & 50.00\% & \bfseries 2.9213 & 0.0018 \\
 & 9 & 100.00\% & 107.9556 & 50.00\% & \bfseries 30.2104 & 0.0043 \\
 % & 10 & - & >500 & - & >500 & 0.0082 \\
\cline{1-7}
\multirow[c]{5}{*}{qwalk} & 6 & 50.00\% & \bfseries 0.0466 & 50.00\% & 1.1475 & 0.0097 \\
 & 7 & 62.50\% & \bfseries 0.3807 & 50.00\% & 4.5801 & 0.0165 \\
 & 8 & 100.00\% & \bfseries 6.9250 & 50.00\% & 30.7021 & 0.0285 \\
 & 9 & 100.00\% & \bfseries 56.6530 & - & >500 & 0.0539 \\
 & 10 & 100.00\% & \bfseries 465.2573 & - & >500 & 0.1178 \\
 % & 11 & - & >500 & - & >500 & 0.2353 \\
\cline{1-7}
tsp & 9 & 41.41\% & 74.2882 & 41.41\% & \bfseries 72.8202 & 0.0163 \\
\cline{1-7}
\multirow[c]{4}{*}{vqe} & 6 & 100.00\% & \bfseries 0.5768 & 100.00\% & 4.5901 & 0.0019 \\
 & 7 & 100.00\% & \bfseries 3.8829 & 100.00\% & 8.2355 & 0.0044 \\
 & 8 & 100.00\% & \bfseries 31.1499 & 100.00\% & 457.2644 & 0.0050 \\
 & 9 & 100.00\% & \bfseries 264.7780 & 100.00\% & 414.3664 & 0.0100 \\
 % & 10 & - & >500 & - & >500 & 0.0244 \\
 % & 11 & - & >500 & - & >500 & 0.1147 \\
 % & 12 & - & >500 & - & >500 & 0.0886 \\
\cline{1-7}
\multirow[c]{3}{*}{wstate} & 7 & 100.00\% & \bfseries 2.3740 & 100.00\% & 2.5267 & 0.0003 \\
 & 8 & 100.00\% & \bfseries 18.2656 & 100.00\% & 23.3446 & 0.0004 \\
 & 9 & 100.00\% & \bfseries 151.7882 & 100.00\% & 391.4060 & 0.0004 \\
 % & 10 & - & >500 & - & >500 & 0.0006 \\
\cline{1-7}
\bottomrule
\end{tabular}
\end{adjustbox}
\end{table}

% \end{tabular}%
% \begin{tabular}[t]{|rc|rr|rr|r}
% \toprule
%  &  & \multicolumn{2}{c}{\textbf{CLUE}} & \multicolumn{2}{c}{\textbf{DDSIM+CLUE}} & \textbf{DDSIM} \\
%  &  & \textbf{RR for $S_{\ket{0}}$} & \textbf{Time (s)} & \textbf{RR for $S_{\ket{0}}$} & \textbf{Time (s)} & \textbf{Time (s)} \\
% \midrule

%Table~\ref{table:benchmark} differentiates between reductions computed using vector-matrix representation (column CLUE) and using decision diagrams (columns DDSIM+CLUE). 

% \Antonio{Summarizing results and observations from Table~\ref{table:benchmark}:
% \begin{itemize}
%     \item If the model does not reduce, DDSIM+CLUE is slower than CLUE. This may be explained by the fact that one needs to compute $2^n$ iterations and decision diagrams may become prohibitively larger after so many repetitions.
%     \item If the model reduces we can distinguish three behaviors:
%     \begin{enumerate}
%         \item If the model is sparse, CLUE tends to outperform DDSIM+CLUE (see \texttt{ghz}).
%         \item If the model is dense, DDSIM+CLUE is faster than CLUE (see \texttt{dj}, \texttt{graphstate}, \texttt{qft}, etc).
%     \end{enumerate}
%     \item In general: CLUE is cubic in size of the matrix while DDSIM+CLUE is cubic in the maximal size of the decision diagrams. If the model reduces, these sizes are usually not too large. Hence, having DDSIM+CLUE enjoys better asymptotic behavior than CLUE. We can see this by noting that CLUE is faster for smaller qubit numbers, and slower for larger ones (e.g., \texttt{pricingcall} and \texttt{pricingput}).
% \end{itemize}}

\section{Conclusion}\label{sec:conc}

We introduced forward and backward constrained bisimulations for quantum circuits which allow by means of reduction to preserve an invariant subspace of interest. The applicability of the approach was demonstrated by obtaining substantial reductions of common quantum algorithms, including, in particular, quantum search, quantum approximate optimization algorithms for SAT and MaxCut, as well as a number of benchmark circuits. Overall, the results suggest that constrained bisimulation can be used as a tool for speeding up the simulation of quantum circuits on classic computers, especially when the circuit is to be simulated under several initial conditions or for multi-step applications.

In line with the relevant literature on bisimulations for dynamical systems, constrained bisimulations introduce loss of information due to their underlying projection onto a smaller-dimensional state space; the information that is preserved, however, is exact. A relevant issue for future work is to consider approximate variants of bisimulation for quantum circuits, in order to find more aggressive reductions or to capture quantum-specific phenomena such as quantum noise.

%\paragraph*{\textbf{Acknowledgments}} This work was partially supported by the Poul Due Jensen Foundation grant 883901, the Villum Investigator Grant S4OS and the project SERICS (PE00000014) under the MUR National Recovery and Resilience Plan funded by the European Union - NextGenerationEU.

%%
%% The next two lines define the bibliography style to be used, and
%% the bibliography file.
\bibliographystyle{ACM-Reference-Format}
\bibliography{bibliography}

%%
%% If your work has an appendix, this is the place to put it.
\appendix

\end{document}